\newtheorem{definition}{Definition}
\newtheorem{theorem}{Theorem}
\newtheorem{proof}{Proof}
\definecolor{attributeColor}{rgb}{0,0.5,0}
\definecolor{stringColor}{rgb}{0.7,0,0}
\definecolor{commentColor}{rgb}{0.4,0.4,0.4}
\definecolor{attributeColor}{rgb}{0.2,0.6,0.8}
\lstdefinelanguage{OpenXES}{
	morekeywords={log,trace,event,date,string},
	sensitive=false,
	morestring=[b]"
}
\crefname{lstlisting}{Listing}{listings}
\Crefname{lstlisting}{Listing}{Listings}
\newcommand{\lstsetOpenXES}{
	\lstset{%
		language=OpenXES,                  
		basicstyle=\footnotesize\ttfamily, 
		numbers=left,                   
		numberstyle=\footnotesize,      
		stepnumber=1,                   
		numbersep=10pt,                 
		backgroundcolor=\color{white},  
		showspaces=false,               
		showstringspaces=false,         
		showtabs=false,                 
		frame=none,	                    
		tabsize=2,	                    
		captionpos=b,                   
		breaklines=true,                
		breakatwhitespace=true,         
		escapeinside={\%*}{*)},         
		keywordstyle=\color{blue},
		emphstyle=\color{attributeColor}\bfseries,
		stringstyle=\color{stringColor},
		commentstyle=\color{commentColor},
		emph={key, value}
	}
}
\begin{document}

\title{Heuristics Miners for Streaming Event Data}
\author{Andrea Burattin \footnote{Email: \emph{burattin@math.unipd.it}. Affiliation: Department of Mathematics, University of Padua, Italy.} \and
Alessandro Sperduti \footnote{Email: \emph{sperduti@math.unipd.it}. Affiliation: Department of Mathematics, University of Padua, Italy.} \and
Wil M. P. van der Aalst \footnote{Email: \emph{w.m.p.v.d.aalst@tue.nl}. Affiliation: Department of Mathematics and Computer Science, Eindhoven University of Technology, The Netherlands.}}
\date{}

\maketitle

\begin{abstract}
More and more business activities are performed using information systems. These systems produce such huge amounts of event data  that existing systems are unable to store and process them. Moreover, few processes are in steady-state and due to changing circumstances processes evolve and systems need to adapt continuously. Since conventional process discovery algorithms have been defined for batch processing, it is difficult to apply them in such evolving environments. Existing algorithms cannot cope with streaming event data and tend to generate unreliable and obsolete results.

In this paper, we discuss the peculiarities of dealing with streaming event data in the context of process mining. Subsequently, we present a general framework for defining process mining algorithms in settings where it is impossible to store all events over an extended period or where processes evolve while being analyzed. We show how the Heuristics Miner, one of the most effective process discovery algorithms for practical applications, can be modified using this framework. Different stream-aware versions of the Heuristics Miner are defined and implemented in ProM. Moreover, experimental results on artificial and real logs are reported.

\ \\
\emph{\textbf{Keywords:} process mining; control-flow discovery; online process mining}
\end{abstract}

\section{Introduction}

One of the main aims of \emph{process mining} is control-flow discovery, i.e., learning process models from example traces recorded in some event log. Many different control-flow discovery algorithms have been proposed in the past (see \cite{VanderAalst2011}). Basically, all such algorithms have been defined for batch processing, i.e., a complete event log containing all executed activities is supposed to be available at the moment of execution of the mining algorithm. Nowadays, however, the information systems supporting business processes are able to produce a huge amount of events thus creating new opportunities and challenges from a computational point of view. In fact, in case of streaming data it may be impossible to store all events.
Moreover, even if one is able to store all event data, it is often impossible to process them due to the exponential nature of most algorithms. In addition to that, a business process may evolve over time. Manyika et al. \cite{Manyika2011} report possible ways for exploiting large amount of data to improve the company business. In their paper, \emph{stream processing} is defined as ``\emph{technologies designed to process large real-time streams of event data}'' and one of the example applications is \emph{process monitoring}.
The challenge to deal with streaming event data is also discussed in the Process Mining Manifesto\footnote{The Process Mining Manifesto is authored by the IEEE Task Force on Process Mining (\url{www.win.tue.nl/ieeetfpm/}).} \cite{PMMAnifesto}.

Currently, however, there are no process mining algorithms able to mine an event stream. This paper is the first that presents algorithms for discovering process models based on streaming event data. In the remainder of this paper we refer to this problem as \emph{Streaming Process Discovery} (or SPD).

According to \cite{Aggarwal2007,Bifet2010}, a data stream consists of an unbounded sequence of data items with a very high throughput. In addition to that, the following assumptions are typically made:
\begin{inparaenum}[\itshape i)]
	\item data is assumed to have a small and fixed number of attributes;
	\item mining algorithms should be able to process an infinite amount of data, without exceeding memory limits or otherwise fail, no matter how many items are processed;
	\item for classification tasks, data has a limited number of possible class labels;
	\item the amount of memory available to a learning/mining algorithm is considered finite, and typically much smaller than the data observed in a reasonable span of time;
	\item there is a small upper bound on the time allowed to process an item, e.g. algorithms have to scale linearly with the number of processed items: typically the algorithms work with one pass of the data; and
	\item stream ``concepts'' are assumed to be stationary or evolving \cite{VanLeeuwen2008a,Widmer1996}.
\end{inparaenum}

In SPD, a typical task is to reconstruct a control-flow model that could have generated the observed event log.
The general representation of the SPD problem that we adopt in this paper is shown in
\cref{fig:general-idea}: one or more sources emit events (represented as solid dots) which are observed by the stream miner that keeps the representation of the process model up-to-date.
Obviously, no standard  mining algorithm adopting a batch approach is able to deal with this scenario.

An SPD algorithm has to give satisfactory answers to the following two categories of questions:
\begin{enumerate}
	\item Is it possible to discover a process model while storing a minimal amount of information? What should be stored? What is the performance of such methods both in terms of model quality and speed/memory usage?
	\item Can SPD techniques deal with changing processes? What is the performance when the stream exhibits certain types of concept drift?
\end{enumerate}

\begin{figure}[t]\centering
	\includegraphics[width=\textwidth]{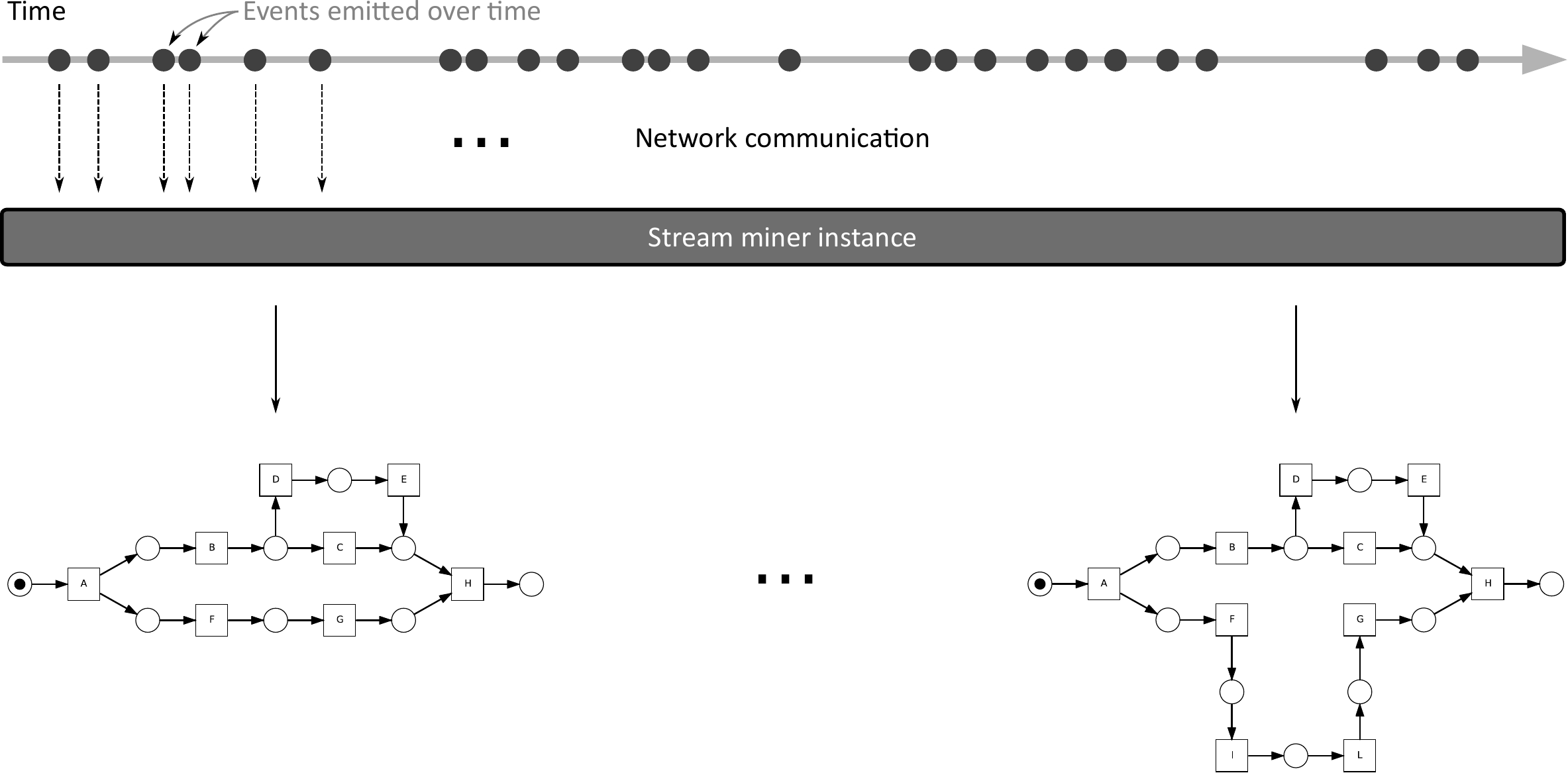}
	\caption{General idea of SPD: the stream miner continuously receives events and, using the latest observations, updates the process model.}
	\label{fig:general-idea}
\end{figure}

In this paper, we discuss the peculiarities  of mining a stream of logs in the context of process mining. Subsequently, we present a general framework for defining process mining algorithms for streams of logs. We show how the Heuristics Miner, one of the  more effective algorithms for practical applications of process mining, can be adapted for stream mining according to our SPD framework.

\vspace{1em}

A data stream is defined as a ``real-time, continuous, ordered sequence of items'' \cite{Golab2003}. The ordering of the data items is expressed implicitly by the arrival timestamp of each item. Algorithms that are supposed to interact with data streams must respect some requirements, such as:
\begin{inparaenum}[\itshape a\upshape)]
	\item it is impossible to store the complete stream;
	\item backtracking over a data stream is not feasible, so algorithms are required to make only one pass over data;
	\item it is important to quickly adapt the model to cope with unusual data values;
	\item the approach must deal with variable system conditions, such as fluctuating stream rates.
\end{inparaenum}
Due to these requirements, algorithms for data streams mining are divided into two categories: data and task based \cite{Gaber2005}. The idea of the first ones is to use only a fragment of the entire dataset (by reducing the data into a smaller representation). The idea of the latter approach is to modify existing techniques (or invent new ones) to achieve time and space efficient solutions.

The main ``data based'' techniques are: sampling, load shedding, sketching and aggregation. All these are based on the idea of randomly select items or stream portions. The main drawback is that, since the dataset size is unknown, it is hard to define the number of items to collect; moreover it is possible that some of the items that are ignored were actually interesting and meaningful.
Other approaches, like aggregation, are slightly different: they are based on summarization techniques and, in this case, the idea is to consider measures such as mean and variance; with these approaches, problems arise when the data distribution contains many fluctuations.

The main ``task based'' techniques are: approximation algorithms, sliding window and algorithm output granularity.
Approximation algorithms aim to extract an approximate solution. It is possible to define error bounds on the procedure. This way, one obtains an ``accuracy measure''.
The basic idea of sliding window is that users are more interested in most recent data, thus the analysis is performed giving more importance to recent data, and considering only summarization of the old ones.
The main characteristic of ``algorithm output granularity'' is the ability to adapt the analysis to resource availability.


The task of mining data stream is typically focused on specific types of algorithms \cite{Gaber2005,Widmer1996,Aggarwal2007}. In particular, there are techniques for: clustering; classification; frequency counting; time series analysis and change diagnosis (concept drift detection). All these techniques cope with very specific problems and cannot be adapted to the SPD problem. However, as this work presents, it is possible to reuse some principles or to reduce the SPD to sub-problems that can be solved with the available algorithms.


Over the last decade dozens of process discovery techniques have been proposed \cite{VanderAalst2011}, e.g., the Heuristics Miner \cite{Weijters2003}. However, these all work on a full event log and not streaming data.
Few works in process mining literature touch issues related to mining event data streams.

In \cite{Kindler2005a,Kindler2006c}, the authors focus on incremental workflow mining and \emph{task mining} (i.e. the identification of the activities starting from the documents accessed by users).
The basic idea is to mine process instances as soon as they are observed; each new model is then merged with the previous one so to refine the global process representation.
The approach described is thought to deal with the incremental process refinement based on logs generated from version management systems. However, as authors state, only the initial idea is sketched.

An approach for mining legacy systems is described in \cite{Kalsing2010}. In particular, after the introduction of monitoring statements into the legacy code, an incremental process mining approach is presented. The idea is to apply the same heuristics of the Heuristics Miner into the process instances and add these data into an AVL tree, which are used to find the best holding relations. Actually, this technique operates on ``log fragments'' and not on single events so it is not really suitable for an online setting.
Moreover, heuristics are based on frequencies, so they must be computed with respect to a set of traces and, again, this is not suitable for the settings with streaming event data.

An interesting contribution to the analysis of evolving processes is given in the paper by Bose et al. \cite{Bose2011}. The proposed approach, based on statistical hypothesis tests, aims at detecting \emph{concept drift}, i.e. the changes in event logs, and identifying the regions of change in a process.

Sol\'e and Carmona, in \cite{Sole2012}, describe an incremental approach for translating transition systems into Petri nets. This translation is performed using Region Theory. The approach solves the problem of complexity of the translation, by splitting the log into several parts; applying the Region Theory to each of them and then combine all them. These regions are finally converted into Petri net.

The above review of the literature shows there no process mining technique for SPD that address the requirements listed in this section.

The remainder of this paper is organized as follows: \cref{sec:basic-concepts} presents the basic concepts related to SPD; \cref{sec:approach} describes the new algorithms designed to tackle stream process mining; \cref{sec:implementation} reports some details about the implementation of all the approaches in ProM and \cref{sec:results} presents the results of several experiments; \cref{sec:conclusions} concludes the paper. This work contains two appendices: \cref{appendix:HM} summarizes the Heuristics Miner algorithm, \cref{sec:error-bound} presents some details on error bounds.

\section{Basic concepts} \label{sec:basic-concepts}

The main difference between classical process mining \cite{VanderAalst2011} and SPD lies in the assumed input format. For SPD we assume streaming event data that may even come from multiple sources rather that a static event log
containing historic data.

In this paper, we assume that each \emph{event}, received by the miner, contains the \emph{name of the activity} executed, the \emph{case id} it belongs to, and a \emph{timestamp}. A formal definition of these elements is as follows:
\begin{definition}[Activity, Case, Time and Event Stream]
	Let $\mathcal{A}$ be a set of activities and $\mathcal{C}$ be a set of case identifiers. An \emph{event} is a triplet $(c,a,t) \in \mathcal{C} \times \mathcal{A} \times \mathbb{N}$, i.e., the occurrence of activity $a$ for case $c$ (i.e. the process instance) at time $t$ (timestamp of emission of the event).
	Actually, in the miner, rather than using an absolute timestamp, we consider a progressive number representing the number of events seen so far, so an event at time $t$ is followed by another event at time $t+1$, regardless the time lasts between them.  $S \in (\mathcal{C}\times \mathcal{A} \times \mathbb{N})^*$ is an event stream, i.e., a sequence of events that are observed item by item.
	The events in $S$ are sorted according to the order they are emitted, i.e. the event timestamp.
\end{definition}

Starting from this definition, it is possible to define some functions:
\begin{definition}[Case time scope]
	$t_{\text{start}}(c)=\min_{(c,a,t)\in S} t$, i.e. the time when the first activity for $c$ is observed.
	$t_{\text{end}}(c)=\max_{(c,a,t)\in S}t$, i.e. the time when the last activity for $c$ is observed.
\end{definition}

\begin{definition}[Subsequence]
	Given a sequence of events $S \in (\mathcal{C}\times \mathcal{A} \times \mathbb{N})^*$, it is a sorted series of events: $S = \langle \dots, s_i, \dots, s_{i+j}, \dots \rangle$ where $s_i = (c,a,t) \in \mathcal{C}\times \mathcal{A} \times \mathbb{N}$.
	A subsequence $S_i^j$ of $S$ is a sequence that identifies the elements of $S$ starting at position $i$ and finishing at position $i+j$: $S_i^j = \langle s_i, \dots, s_{i+j}\rangle$.
\end{definition}

In order to relate classical control-flow discovery algorithms with new algorithms for streams, we can consider an \emph{observation period}. An observation period $O$ for an event stream $S$, is a finite subsequence of $S$ starting at time $i$ and with size $j$: $O = S_i^j$. Basically, any observation period is a finite subsequence of a stream, and it can be understood as a classical log file (although the ``head'' and ``tail'' of some cases may be missing).
A well-established control-flow discovery algorithm that can be applied to an observation period log is the Heuristics Miner, whose main features are reported in \cref{appendix:HM}.

\vspace{1em}
In analogy with classical data streams, an event stream can be defined as {\it stationary} or {\it evolving}. In our context, a stationary stream can be seen as generated by a business process that does not change with time. On the contrary, an evolving stream can be understood as generated by a process that changes in time. More precisely, different modes of change can be considered: {\it i)} drift of the process model;  {\it ii)} shift of the process model;  {\it iii)} cases (i.e., execution instances of the process) distribution change.
Drift and shift of the process model correspond to the classical two modes of \emph{concept drift} \cite{Bose2011} in data streams: a drift of the model refers to a  gradual change of the underlying process, while a model shift happens when a change between two process models is more abrupt.
The change in cases distribution represents another way in which an event stream can evolve, i.e. the original process may stay the same during time, however, the distribution of the cases is not stationary. With this we mean that the distribution of the features of the process cases change with time.
For example, in a production process of a company selling clothing, the items involved in incoming orders (i.e., cases features) during winter will follow a completely different distribution with respect to items involved in incoming orders during the summer.  Such distribution change may significantly affect the relevance of specific paths in the control-flow of the involved process.

Going back to process model drift, there is a peculiarity of business event streams that cannot be found in traditional data streams. An event log records that a specific activity $a_i$ of a business process $P$ has been executed at time $t$  for a {\it specific} case $c_j$. If the drift from $P$ to $P'$ happens at time $t^*$ {\it while} the process is running, there might be cases for which all the activities have been executed within $P$ (i.e., cases that have terminated their execution before $t^*$), cases for which all the activities have been executed within $P'$ (i.e., cases that have started their execution on or after $t^*$), and cases that have some activities executed within $P$ and some others within $P'$ (i.e., cases that have started their execution before $t^*$ and have terminated after $t^*$). We will refer to these cases as {\it transient cases}. So, under this scenario, the stream will first emit events of cases executed within $P$, followed by events of transient cases, followed by  events of cases executed within $P'$. On the contrary, if the drift {\it does not occur} while the process is running, the stream will first report events referring to complete executions (i.e. cases) of $P$, followed by events referring to complete executions of $P'$ (no transient cases). In any case, the drift is characterized by the fact that $P'$ is very similar to $P$, i.e. the change in the process which emits the events is limited.

Due to space limitation, we restrict our treatment to stationary streams and streams with concept drift with no generation of transient cases. The treatment of other scenarios is left for future work.

\section{Heuristics Miners for Streams} \label{sec:approach}

In this section, we present variants of the Heuristics Miner algorithm (described in \cref{appendix:HM}) to address the SPD problem under different scenarios. First of all, we present two basic algorithms where the standard batch version of Heuristics Miner is used on logs as observation periods extracted from the stream. These algorithms will be used as a baseline reference for the experimental evaluation. Subsequently, a ``fully online'' version of Heuristics Miner, to cope with stationary streams, drift of the process model with no transient cases, and shift of the process model, is introduced.

\subsection{Baseline Algorithm for Stream Mining} \label{sec:basic-approach}

The simplest way to adapt the Heuristics Miner algorithm to deal with streams is to collect events during specific observation periods and then applying the batch version of the algorithm to the current log. This idea is described by \cref{alg:simple_stream} in which two different policies to maintain events in memory are considered.
\begin{algorithm2e}
	\DontPrintSemicolon
	\KwIn{$S$ event stream; $M$ memory of size $\mathit{max}_M$; $P_M$ memory policy (can be `\textit{reset}' or `\textit{shift}')}

	\BlankLine
	\For{\hspace{-0.1cm}\emph{\textbf{ever}}} {
		$e \gets \mathit{observe}(S)$ \tcc*{Observe a new event, where $e = (c_i, a_i,t_i)$}
		\tcc{Check if event $e$ has to be used}
		\If {$\mathit{analyze}(e)$} {

			\BlankLine
			\tcc{Memory update}
			\If{$\mathit{size}(M) = \mathit{max}_M$} {
				\lIf{$P_M$ \textbf{\emph{is}} reset} {$\mathit{reset}(M)$} \;
				\lIf{$P_M$ \textbf{\emph{is}} shift} {$\mathit{shift}(M)$} \;
			}
			$\mathit{insert}(M, e)$ \;

			\BlankLine
			\tcc{Mining update}
			\If{perform mining} {
				$\mathit{HeuristicsMiner}(M)$ \;
			}
		}
	}
	\caption{Sliding Window HM / Periodic Resets HM \label{alg:simple_stream}}
\end{algorithm2e}
Specifically, an event $e$ from the stream $S$ is observed ($e \gets \mathit{observe}(S)$) and analyzed ($\mathit{analyze}(e)$)  to decide if the event has to be considered for mining. If this is the case, it is checked whether there is room in memory to accommodate the event. If the memory is full ($\mathit{size}(M) = \mathit{max}_M$) then the memory policy given as input is adopted. Two different policies are considered: {\it periodic resets}, and {\it sliding windows} \cite[Ch.\ 8]{Aggarwal2007}. In the case of \emph{periodic resets} all the events contained in memory are deleted ($\mathit{reset}$), while in the case of \emph{sliding windows}, only the oldest event is deleted ($\mathit{shift}$). Subsequently, $e$ is inserted in memory and it is checked if it is necessary to perform a mining action. If mining has to be performed, the Heuristics Miner algorithm is executed on the events in memory ($\mathit{HeuristicsMiner}(M)$). Graphical representations of the two policies are reported in \cref{fig:basic-policies}.

\begin{figure}
	\centering
	\subfloat[Periodic reset \label{fig:basic-policies:1}] {
		\includegraphics[width=.5\textwidth]{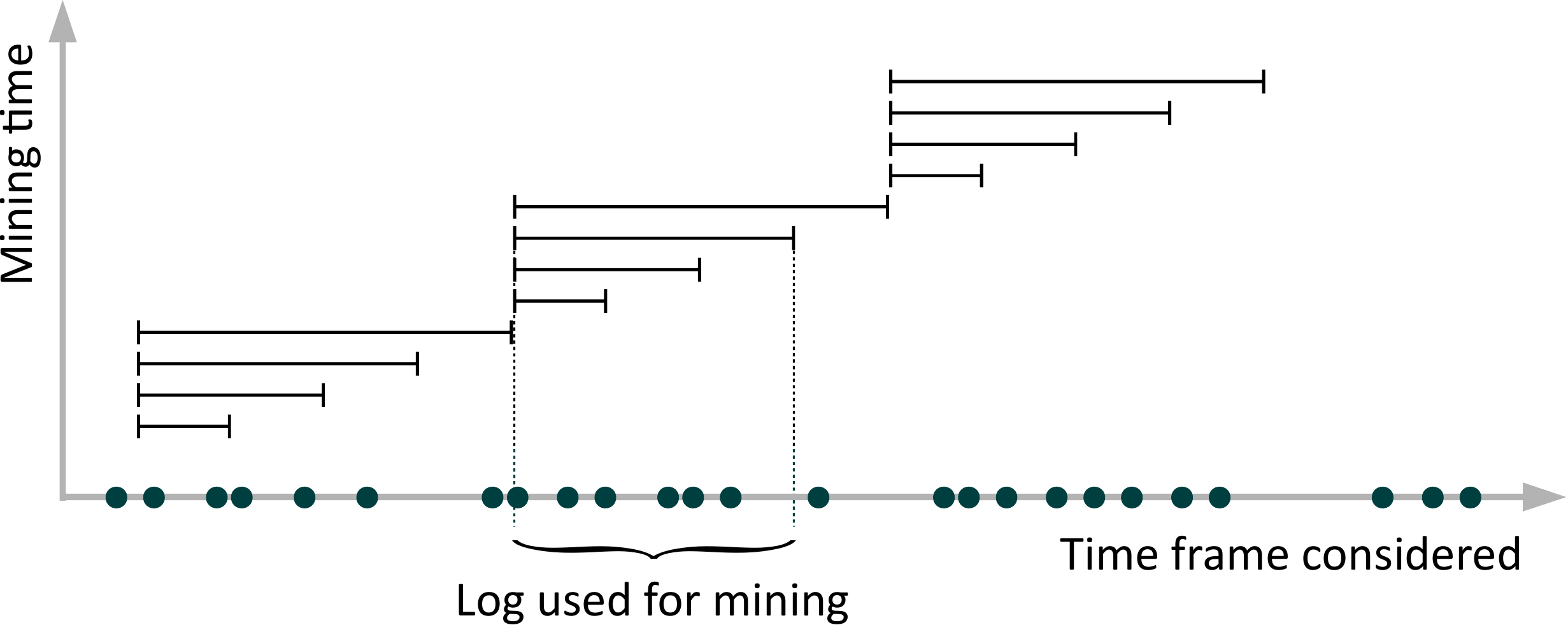}
	}
	\subfloat[Sliding window \label{fig:basic-policies:2}] {
		\includegraphics[width=.5\textwidth]{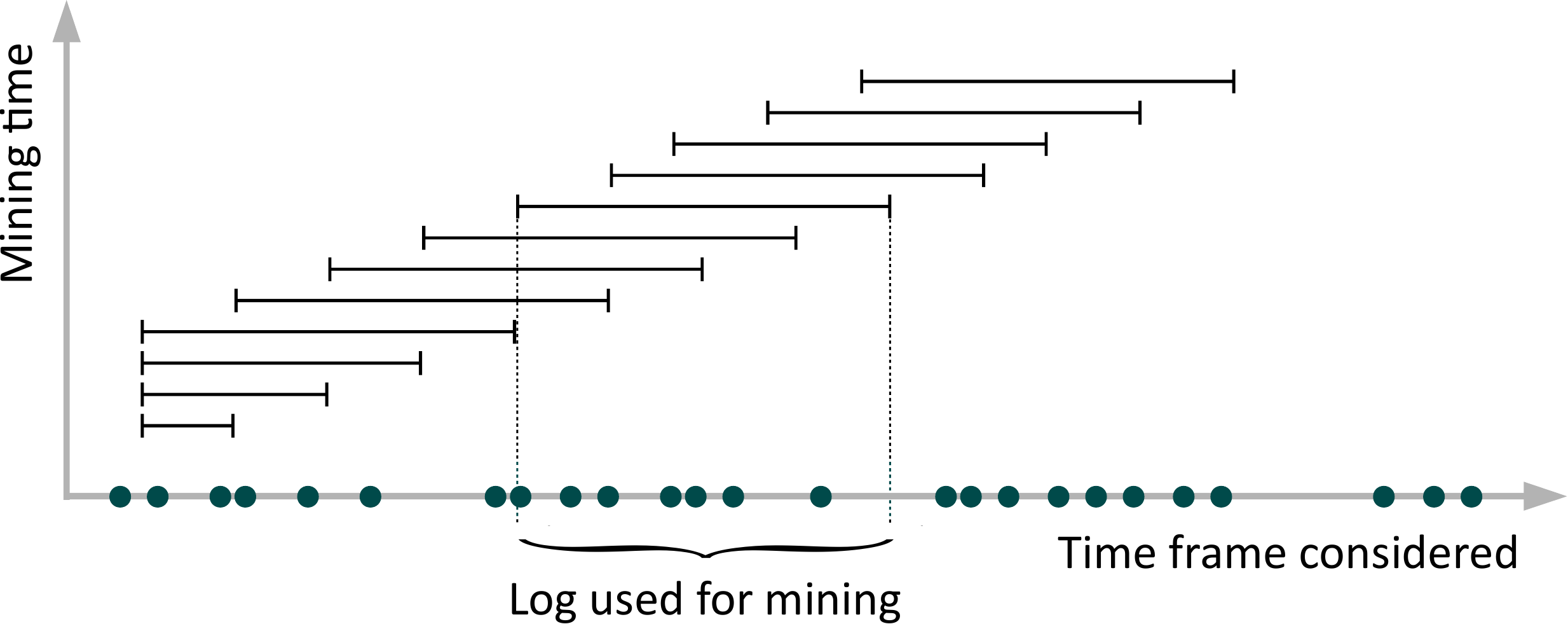}
	}
	\caption{Two basic approaches for the definition of a finite log out of a stream of events. The horizontal segments represent the time frames considered for the mining.}
	\label{fig:basic-policies}
\end{figure}

A potential advantage of the two policies described consists in the possibility to mine the log not only by using Heuristics Miner, but any process mining algorithm (not only for control-flow discovery, for example it is possible to extract information about the social network) already available for traditional batch process discovery techniques. However, the notion of ``history'' is not very accurate: only the more recent events are considered, and an equal importance is assigned to all of them. Moreover, the model is not updated in real-time since each new event received triggers only the update of the log, not necessarily an update of the model: performing a model update for each new event would result in a significant computational burden, well outside the computational limitations assumed for a true online approach. In addition to that, the time required by these approaches is completely unbalanced: when a new event arrives, only inexpensive operations are performed; instead, when the model needs to be
updated, the log retained in memory is mined from scratch. So, every event is handled at least twice: the first time to store it into a log and subsequently any time the mining phase takes place on it. In an online setting, it is more desirable a procedure that does not need to process each event more than once (``one pass algorithm'' \cite{Schweikardt2009}).

\subsection{Stream-Specific Approaches} \label{sec:stream-specific}

In this section, we suggest how to modify the scheme of the basic approaches, so to implement a real online framework, the final approach is described in \cref{online_alg}.
In this framework, the ``current'' log is described in terms of ``latest observed activities'' and ``latest observed dependencies''. Specifically, we define three queues:
\begin{enumerate}
	\item $Q_\mathcal{A}$, with entries in  $\mathcal{A} \times \mathbb{R}$, stores the most recent observed activities jointly with a weight for each activity (that represents its degree of importance with respect to mining);
	\item $Q_\mathcal{C}$, with entries in $\mathcal{C} \times \mathcal{A}$, stores the most recent observed event for each case;
	\item $Q_\mathcal{R}$ with entries in $\mathcal{A} \times \mathcal{A} \times \mathbb{R}$, stores the most recent observed direct succession relations jointly with a weight for each succession relation (that represents its degree of importance with respect to mining).
\end{enumerate}
These queues are used by the online algorithm to retain the information needed to perform mining.

The detailed description of the new algorithm is presented in \cref{online_alg}.
\begin{algorithm2e}
	\DontPrintSemicolon
	\KwIn{$S$ event stream; $\mathit{max}_{Q_\mathcal{A}}, \mathit{max}_{Q_\mathcal{C}}, \mathit{max}_{Q_\mathcal{R}}$ maximum memory sizes for queues $Q_\mathcal{A}$, $Q_\mathcal{C}$, and $Q_\mathcal{R} $, respectively;
	$f_{W_\mathcal{A}}, f_{W_\mathcal{R}}$ model policy; $\mathit{generateModel}(\cdot,\cdot)$.}
	\BlankLine
	\For{\hspace{-0.1cm}\emph{\textbf{ever}}} {
		$e \gets \mathit{observe}(S)$ \tcc*{observe a new event, where $e = (c_i, a_i,t_i)$}
		\tcc{check if event $e$ has to be used}
		\If {$\mathit{analyze}(e)$}{
					\BlankLine
			\uIf {$\not\!\exists (a, w) \in Q_\mathcal{A}$ s.t. $a=a_i$} {
				\If {$\mathit{size}(Q_\mathcal{A}) = \mathit{max}_{Q_\mathcal{A}}$} {
					$\mathit{removeLast}(Q_\mathcal{A})$ \tcc*{removes last entry of $Q_\mathcal{A}$}
				}
				$w \gets 0$ \; \label{algline:setw}
			} \Else {
				$w \gets get(Q_\mathcal{A}, a_i)$ \tcc*{$get$ returns the old weight $w$ of $a_i$ and removes $(a_i,w)$}
			}
			$\mathit{insert}(Q_\mathcal{A}, (a_i, w))$  \tcc*{inserts in front of $Q_\mathcal{A}$}
			$Q_\mathcal{A} \gets f_{W_\mathcal{A}}(Q_\mathcal{A})$ \tcc*{updates the weights of $Q_\mathcal{A}$}
			\BlankLine
			\uIf{$\exists (c,a) \in Q_\mathcal{C}$ s.t. $c=c_i$} {
				$a \gets get(Q_\mathcal{C}, c_i)$  \tcc*{$get$ returns the old activity $a$ of $c_i$ and removes $(c_i,a)$}
				\uIf {$\not\!\exists (a_s, a_f, u) \in Q_\mathcal{R}$ s.t. $(a_s=a )\wedge (a_f = a_i)$} {
					\If{$\mathit{size}(Q_\mathcal{R}) = \mathit{max}_{Q_\mathcal{R}}$} {
						$\mathit{removeLast}(Q_\mathcal{R})\!\!\!\!$ \tcc*{removes last entry of $Q_\mathcal{R}$}
					}
				$u \gets 0$ \; \label{algline:setu}
				} \Else {
					$u \gets \mathit{get}(Q_\mathcal{R}, a, a_i)$ \tcc*{$get$ returns the old weight $u$ of relation $a \to a_i$ and removes $(a,a_i,u)$}
				}
				$\mathit{insert}(Q_\mathcal{R}, (a, a_i,u))$\tcc*{inserts in front of $Q_\mathcal{R}$}
				$Q_\mathcal{R} \gets f_{W_\mathcal{R}}(Q_\mathcal{R})$ \tcc*{updates the weights of $Q_\mathcal{R}$}
			} \ElseIf {$\mathit{size}(Q_\mathcal{C}) = \mathit{max}_{Q_\mathcal{C}}$} {
				$\mathit{removeLast}(Q_\mathcal{C})$ \tcc*{removes last entry of $Q_\mathcal{C}$}
			}
			$\mathit{insert}(Q_\mathcal{C}, (c_i, a_i))$  \tcc*{inserts in front of $Q_\mathcal{C}$}
			\BlankLine
			\tcc{generate model}
			\If{model} {
				$\mathit{generateModel}(Q_\mathcal{A}, Q_\mathcal{R})$ \;
			}
		}
	}

	\caption{Online HM \label{online_alg}}
\end{algorithm2e}
Specifically, the algorithm runs forever, considering, at each round, the current observed event $e = (c_i,a_i,t_i)$. For each current event, it is checked if $a_i$ is already in $Q_\mathcal{A}$. If this is not the case, $a_i$ is inserted in $Q_\mathcal{A}$ with weight $0$. If $a_i$ is already present in the queue, it is removed from its current position and moved at the beginning of the queue. In any case, before insertion, it is checked if $Q_\mathcal{A}$ is full. If this is the case, the oldest stored activity, i.e. the last in the queue, is removed. Subsequently, the weights of $Q_\mathcal{A}$ are updated by $f_{W_A}$. After that, queue $Q_\mathcal{C}$ is examined to look for the most recent event observed for case $c_i$. If a pair $(c_i,a)$ is found, it is removed from the queue, an instance of the succession relation $(a,a_i)$ is created and searched in $Q_\mathcal{R}$. If it is found, it is moved from the current position to the beginning of $Q_\mathcal{R}$. If it is a new succession relation, its
weight is set to $0$.
In any case, before insertion, it is checked if $Q_\mathcal{R}$ is full. If this is the case, the oldest stored relation, i.e. the last in the queue, is removed.  Subsequently, the weights of $Q_\mathcal{R}$ are updated by $f_{W_R}$. Next, after checking if $Q_\mathcal{C}$ is full (in which case the oldest stored event is removed), the event $e$ is stored in $Q_\mathcal{C}$.

Finally, it is checked if a model has to be generated. If this is the case, the procedure $\mathit{generateModel}(Q_\mathcal{A}, Q_\mathcal{R})$ is executed taking as input the current version of queues   $Q_\mathcal{A}$ and $Q_\mathcal{R}$ and producing ``classical'' model representations, such as Causal Nets \cite{VanderAalst2011a} or Petri Nets.

\Cref{online_alg} is parametric with respect to:
\begin{inparaenum}[\itshape i)]
	\item the way weights of queues $Q_\mathcal{A}$ and $Q_\mathcal{R}$ are updated by $f_{W_\mathcal{A}}$, $f_{W_\mathcal{R}}$, respectively;
	\item how a model is generated by $\mathit{generateModel}(Q_\mathcal{A}, Q_\mathcal{R})$.
\end{inparaenum}
In the following, $\mathit{generateModel}(\cdot,\cdot)$ will correspond to the procedure defined by Heuristics Miner (\cref{appendix:HM}). In particular it is possible to consider $Q_\mathcal{A}$ as the counter of activities (to filter out only the most frequent ones) and $Q_\mathcal{R}$ as the counter of direct succession relations, which are used for the computation of the dependency values between pairs of activities.
The following subsections presents some specific instances for $f_{W_\mathcal{A}}$ and $f_{W_\mathcal{R}}$.

\subsubsection{Online Heuristics Miner (Stationary Streams)}

In the case of stationary streams, we can reproduce the behavior of Heuristics Miner as follows. $Q_\mathcal{A}$ should contain, for each activity $a$, the number of occurrences of $a$ observed in $S$ till the current time. Similarly, $Q_\mathcal{R}$ should contain, for each succession $(a,b)$, the number of occurrences of $(a,b)$ observed in $S$ till the current time. Thus both $f_{W_\mathcal{A}}$ and $f_{W_\mathcal{R}}$ must just increment the weight of the first element of the queue:
$$
	f_{W_\mathcal{A}}((a, w)) = \begin{cases}
		(a, w+1) & \text{if } \mathit{first}(Q_\mathcal{A}) = (a, w) \\
		(a, w) & \text{otherwise}
	\end{cases}
$$
$$
	f_{W_\mathcal{R}}((a, b, w)) = \begin{cases}
		(a, b, w+1) & \text{if } \mathit{first}(Q_\mathcal{R}) = (a, b, w) \\
		(a, b, w) & \text{otherwise}
	\end{cases}
$$
where $\mathit{first}(\cdot)$ returns the first element of the queue.

In case of stationary streams, it is possible to use the Hoeffding bound to derive error bounds on the measures computed by the online version of Heuristics Miner. These bounds became tighter and tighter with the increase of the number of processed events. \cref{sec:error-bound} reports some details on that.

It must be noticed that if the sizes of the queues are large enough, the Online Heuristics Miner collects all the needed statistics from the beginning of the
stream till the current time. So it performs very well, provided that the activity distribution of the stream is stationary. However, in real world business processes it is
natural to observe variations both in events distribution and in the workflow of the process generating the stream (concept drift).

In order to cope with concept drift, more importance should be given to more recent events than to older ones. In the following we present a variant of Online Heuristics Miner able to do that.

\subsubsection{Online Heuristics Miner with Aging (Evolving Streams)}

The idea, in this case, is to decrease the weights for the events (and relations) over time when they are not observed. So, every time a new event is observed, only the weight of its activity (and observed succession) is increased, all the others are reduced. Given an ``aging factor'' $\alpha \in [0, 1)$, the weight functions $f_{W_\mathcal{A}}$ (for activities) and $f_{W_\mathcal{R}}$ (for succession relations) are modified so to replace all the occurrences of $w$ on the right side of the equality with $\alpha w$:
$$
	f_{W_\mathcal{A}}((a, w)) = \begin{cases}
		(a, (\alpha w) + 1) & \text{if } \mathit{first}(Q_\mathcal{A}) = (a, w) \\
		(a, \alpha w) & \text{otherwise}
	\end{cases}
$$
$$
	f_{W_\mathcal{R}}((a, b, w)) = \begin{cases}
		(a, b, (\alpha w) + 1) & \text{if } \mathit{first}(Q_\mathcal{R}) = (a, b, w) \\
		(a, b, \alpha w) & \text{otherwise}
	\end{cases}
$$

The basic idea of these new functions is to decrease the ``history'' (i.e., the current number of observations) by an aging factor $\alpha$ (in the formula: $\alpha w$) before increasing it by 1 (the new observation).

These new functions decrease all the weights associated to either an event or a succession relation according to the aging factor $\alpha$ which determines the ``speed'' in forgetting an activity or succession relation, however the most recent observation (the first in the respective queue) is increased by 1. Notice that, if an activity or succession relation is not observed for $t$ time steps, its weight becomes $\alpha^t$. Thus the value of $\alpha$ allows controlling the speed of ``forgetting'': the closer $\alpha$ is to $0$ the faster the weight associated to an activity (or succession relation) that has not been observed for some time goes to $0$, thus to allow the miner to assign larger values to recent events. In this way the miner is more sensitive to sharp variations of the event distribution (concept shift), however the output (generated models) may be less stable because the algorithm becomes more sensitive to random fluctuations of the sampling distribution. When the value of $\alpha$ is close to
$1$, activities that have not been observed recently, but were seen more often some time ago, are able to retain their significance, thus allowing the miner to be able to cope with mild variations of the event distribution (concept drift), but not so reactive in case of concept shift.

One drawback of this approach is that, while it is able to ``forget'' old events, it is not able, at time $t$, to preserve precise statistics for the last $k$ observations and to completely drop observations occurred before time $t-k$. This ability could be useful in case a sudden drastic change in the event distribution.

\subsubsection{Online Heuristics Miner with Self-Adapting Aging (Evolving Stream)}

The third approach explored in this section introduces $\alpha$ as a parameter to control the importance of the ``history'' for the mining: the closer it is to $1$, the more importance is given to the history. The value of $\alpha$, should be decided according to the known degree of ``non-stationarity'' of the stream; however, this information might not be available or it might not be fixed (for example, the process is stationary for a period, then it evolves, and then it becomes stationary again). To handle these cases, it is possible to dynamically adapt the value of $\alpha$. In particular, the idea is \emph{to lower the value of $\alpha$ when a drift is observed and to increase it when the stream seems to be stationary}.

A possible approach to detect the drift is to monitor for variations on the fitness value. This measure, evaluated at a certain period, can be considered as the amount of events (considering only the latest ones) that the current mined process is able to explain. When the fitness value changes drastically, it is likely that a drift has occurred. Using the drift detection, it is possible to adapt $\alpha$ according to the following rules:
\begin{itemize}
	\item if the fitness \emph{decreases} (i.e. there is a drift) $\alpha$ should decreases too (up to $0$), in order to allow the current model to adapt to the new data;
	\item if the fitness \emph{remains unchanged} (i.e. it is within a small interval), it means that there is no drift so the value of $\alpha$ should be increased (up to $1$);
	\item if the fitness \emph{increases}, $\alpha$ should be increased too (up to $1$).
\end{itemize}
The experiments, presented on the next section, consider only variations of $\alpha$ by a constant factor. Alternative update policies (e.g. making the speed of change of $\alpha$ proportional to the observed fitness change) can be considered and is in fact a topic of future investigations.

Early explorations seem to reveal that the effectiveness of the $\alpha$ update policy heavily depends on the problem type (i.e. characteristics of the event of stream), however this topic still requires more investigations.

\subsection{Stream Process Mining with Lossy Counting (Evolving Stream)} \label{sec:lossy}

The approach presented in this section is an adaptation of an existing technique, used for approximate frequency count. In particular, we modified the ``Lossy Counting'' algorithm described in \cite{Manku2002}. We preferred this approach to Sticky Sampling (described in the same paper) since authors stated that, in practice, Lossy Counting performs better. The entire procedure is presented in \cref{alg:lossy-counting}.

\begin{algorithm2e}
	\begin{small}
	\DontPrintSemicolon
	\KwIn{$S$ event stream; $N$ the bucket counter (initially value 1); $\mathcal{D}_A$ activities set; $\mathcal{D}_C$ cases set; $\mathcal{D}_R$ relations set; $\mathit{generateModel}(\cdot,\cdot)$.}
	\BlankLine
	$w \gets \left\lceil \frac{1}{\epsilon} \right\rceil$ \tcc*{define the bucket width}
	\For{\hspace{-0.1cm}\emph{\textbf{ever}}} {
		$b_\mathit{current} = \left\lceil \frac{N}{w} \right\rceil$ \tcc*{define the current bucket id}
		$e \gets \mathit{observe}(S)$ \tcc*{observe a new event, where $e = (c_i, a_i,\Delta_i)$}
		\BlankLine
		\tcc{update the $\mathcal{D}_A$ data structure}
		\eIf{$\exists (a, f, \Delta) \in \mathcal{D}_A$ such that $a = a_i$} {
			Remove the entry $(a, f, \Delta)$ from $\mathcal{D}_A$ \;
			$\mathcal{D}_A \gets (a, f+1, \Delta)$ \tcc*{updates the frequency of element $a_i$}
		}{
			$\mathcal{D}_A \gets \mathcal{D}_A \cup \{(a_i, 1, b_{\textit{current}}-1) \}$ \tcc*{inserts the new observation}
		}
		\tcc{update the $\mathcal{D}_C$ data structure}
		\eIf{$\exists (c, a, f, \Delta) \in \mathcal{D}_C$ such that $c = c_i$} {
			Remove the entry $(c, a, f, \Delta)$ from $\mathcal{D}_C$ \;
			$\mathcal{D}_C \gets (c, a_i, f+1, \Delta)$ \tcc*{updates the frequency and last activity of case $c_i$}
			\tcc{update the $\mathcal{D}_R$ data structure}
			Build relation $r_i$ as $a \to a_i$ \;
			\eIf{$\exists (r, f, \Delta) \in \mathcal{D}_R$ such that $r = r_i$} {
				Remove the entry $(r, f, \Delta)$ from $\mathcal{D}_R$ \;
				$\mathcal{D}_R \gets (r, f+1, \Delta)$ \tcc*{updates the frequency of element $r_i$}
			}{
				$\mathcal{D}_R \gets \mathcal{D}_R \cup \{(r_i, 1, b_{\textit{current}}-1) \}$ \tcc*{adds the new observation}
			}
		}{
			$\mathcal{D}_C \gets \mathcal{D}_C \cup \{(c_i, a_i, 1, b_{\textit{current}}-1) \}$ \tcc*{adds the new observation}
		}
		\tcc{periodic cleanup}
		\If{$N = 0 \mod w$} {
			\ForEach{$(a,f,\Delta) \in \mathcal{D}_A$ such that $f + \Delta \leq b_{\mathit{current}}$} {
				Remove $(a,f,\Delta)$ from $\mathcal{D}_A$ \;
			}
			\ForEach{$(c,a,f,\Delta) \in \mathcal{D}_C$ such that $f + \Delta \leq b_{\mathit{current}}$} {
				Remove $(c,a,f,\Delta)$ from $\mathcal{D}_C$ \;
			}
			\ForEach{$(r,f,\Delta) \in \mathcal{D}_R$ such that $f + \Delta \leq b_{\mathit{current}}$} {
				Remove $(r,f,\Delta)$ from $\mathcal{D}_R$ \;
			}
		}
		$N \gets N +1$ \tcc*{increments the bucket counter}
		\tcc{generate model}
		\If{model} {
			$\mathit{generateModel}(\mathcal{D}_A, \mathcal{D}_R)$ \;
		}
 	}
 	\end{small}
	\caption{Lossy Counting HM \label{alg:lossy-counting}}
\end{algorithm2e}

The basic idea of Lossy Counting algorithm is to conceptually divide the stream into buckets of width $w = \left\lceil \frac{1}{\epsilon} \right\rceil$, where $\epsilon \in (0, 1)$ is an error parameter.
The \emph{current} bucket (i.e., the bucket of the last element seen) is identified with $b_\mathit{current} = \left\lceil \frac{N}{w} \right\rceil$, where $N$ is the progressive events counter.

The basic data structure used by Lossy Counting is a set of entries of the form $(e, f, \Delta)$ where: $e$ is an element of the stream; $f$ is the estimated frequency of the item $e$; and $\Delta$ is the maximum possible error.
Every time a new element $e$ is observed, the algorithm looks whether the data structure contains an entry for the corresponding element. If such entry exists then its frequency value $f$ is incremented by one, otherwise a new tuple is added: $(e, 1, b_\mathit{current}-1)$.
Every time $N \equiv 0 \mod w$, the algorithm cleans the data structure by removing the entries that satisfy the following inequality: $f + \Delta \leq b_\mathit{current}$. Such condition ensures that, every time the cleanup procedure is executed, $b_\mathit{current} \leq \epsilon N$.

This algorithm has been adapted to the SPD problem, using three instances of the basic data structure. In particular, it counts the frequencies of the activities (with the data structure $\mathcal{D}_A$) and the frequencies of the direct succession relations (with the data structure $\mathcal{D}_R$). In order to obtain the relations, a third instance of the same data structure is used, $\mathcal{D}_C$.
In $\mathcal{D}_C$, each item is of the type $(c, a, f, \Delta)$ where $c \in \mathcal{C}$ represent the case identifier; $f$ and $\Delta$, as in previous cases, respectively correspond to the frequency and to the bucket id; and $a \in A$ is the latest activity observed on the corresponding case.
Every time a new activity is observed, $\mathcal{D}_A$ is updated. After that, the procedure checks if, given the case identifiers of the current event, there is an entry in $\mathcal{D}_C$. If this is not the case a new entry is added to $\mathcal{D}_C$ (by adding the current case id and the activity observed). Otherwise, the $f$ and $a$ components of the entry in $\mathcal{D}_C$ are updated.

The Heuristics Miner can be used to generate the model, since a set of dependencies between activities is available.

\section{Implementation} \label{sec:implementation}

All the approaches presented into this paper have been implemented in the ProM 6.1 toolkit \cite{Verbeek2010}. Moreover, a ``stream simulator'' and a ``logs merger'' have also been implemented to allow for experimentation (to test new algorithms and to compose logs).

Communications between stream sources and stream miner are performed over the network: each event emitted consists of a ``small log'' (i.e., a trace which contains exactly one event), encoded as a XES string \cite{Gunther2009}. An example of an event log streamed is presented in \cref{lst:openxes}. This approach is useful to simulate ``many-to-many environments'' where one source emits events to many miners and one miner can use many stream sources. The current implementation supports only the first scenario (currently it is not possible to mine streams generated by more than one source).

\lstsetOpenXES
\begin{lstlisting}[float,
	language=OpenXES,
	caption={OpenXES fragment streamed over the network.},
	label=lst:openxes]
<log openxes.version="1.0RC7" xes.features="nested-attributes" xes.version="1.0" xmlns="http://www.xes-standard.org/">
	<trace>
		<string key="concept:name" value="case_id_0" />
		<event>
			<date key="time:timestamp" value="2012-04-23T10:33:04.004+02:00" />
			<string key="concept:name" value="A" />
			<string key="lifecycle:transition" value="Task_Execution" />
		</event>
	</trace>
</log>
\end{lstlisting}

\begin{figure}
	\centering
	\includegraphics[width=.8\textwidth]{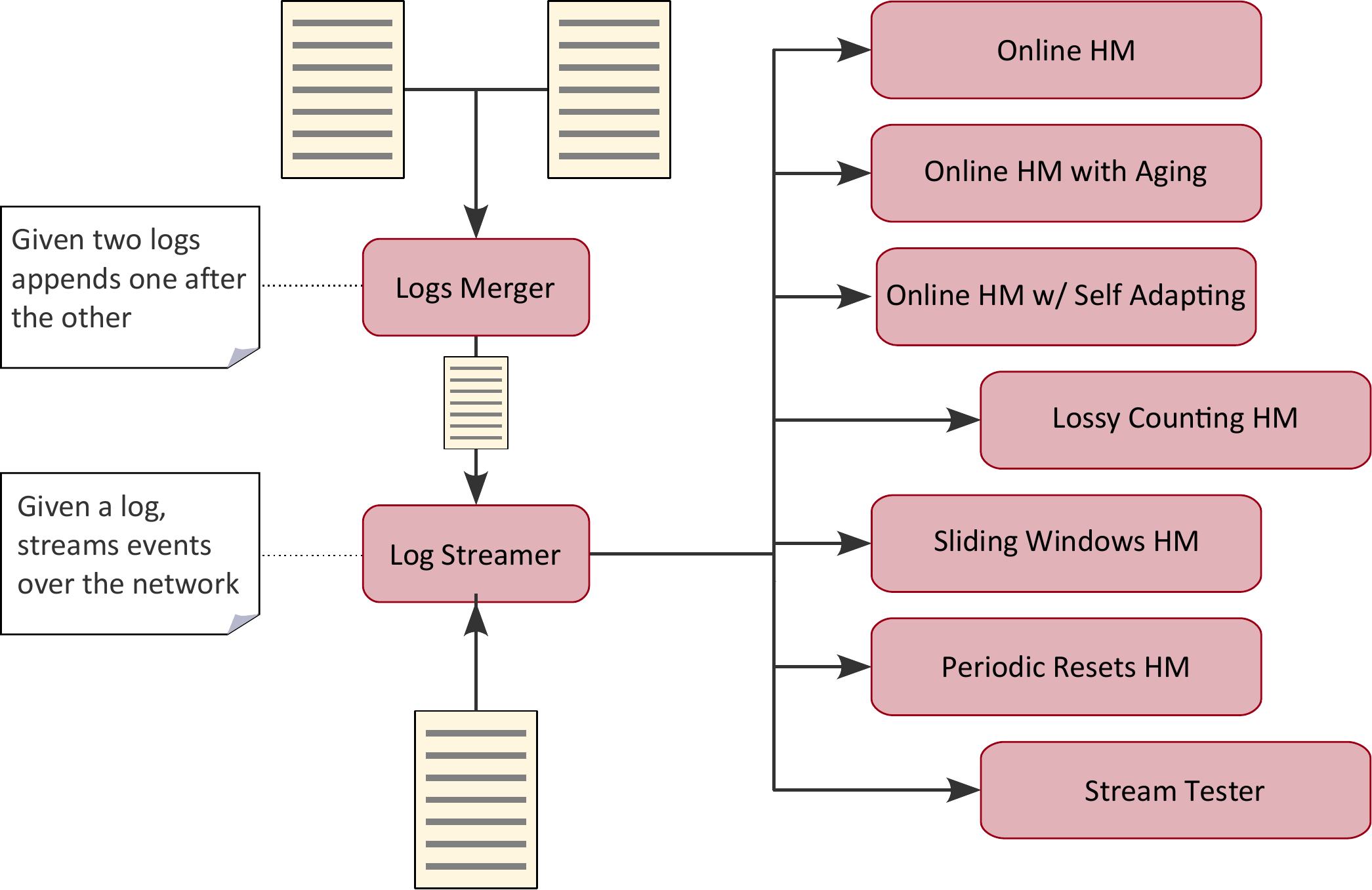}
	\caption{Architecture of the plugins implemented in ProM and how they interact with each other. Each rounded box represents a ProM plugin.}
	\label{fig:plugins-architecture}
\end{figure}

\cref{fig:plugins-architecture} proposes the the set of ProM plugins implemented, and how they interact each other. The available plugins can be split into two groups: plugins for the simulation of the stream and plugins to mine streaming event data. To simulate a stream there is the ``Log Streamer'' plugin. This plugin, receives a static log file as input and streams each event over the network, according to its timestamp (in this context, timestamps are used only to determine the order of events). It is possible to define the time between each event, in order to test the miner under different emission rates (i.e. to simulate different traffic conditions). A second plugin, called ``Logs Merger'' can be used to concatenate different log files generated by different process models, just for testing purposes.

Once the stream is active (i.e. events are sent through the network), the clients can use these data to mine the model. There is a ``Stream Tester'' plugin, which just shows the events received. The other 6 plugins support the two basic approaches (\cref{sec:basic-approach}), and the four stream specific approaches (\cref{sec:lossy} and \ref{sec:stream-specific}).

In a typical session of testing a new stream process mining algorithm, we expect to have two separate ProM instances active at the same time: the first is streaming events over the network and the second is collecting and mining them.

\begin{figure}[t]
	\centering
	\includegraphics[height=3.9cm]{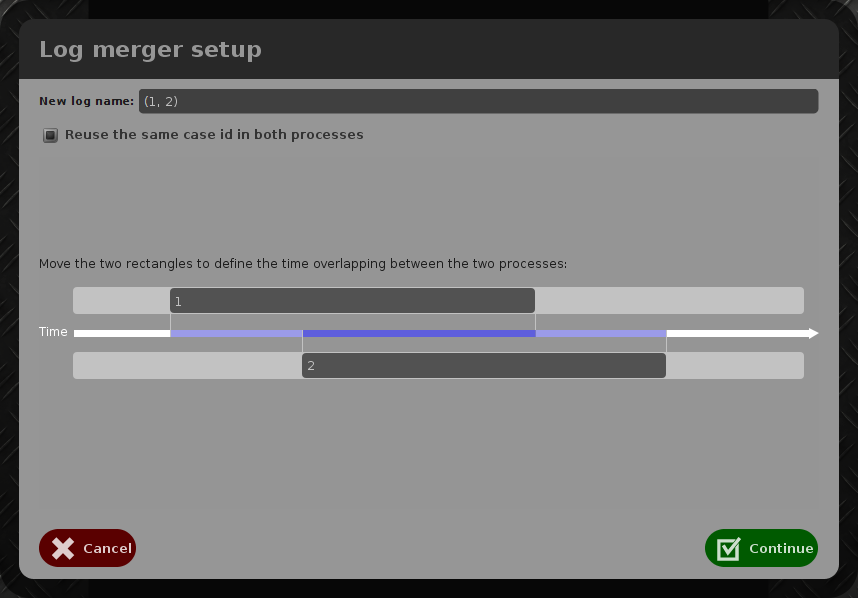} \hfill
	\includegraphics[height=3.9cm]{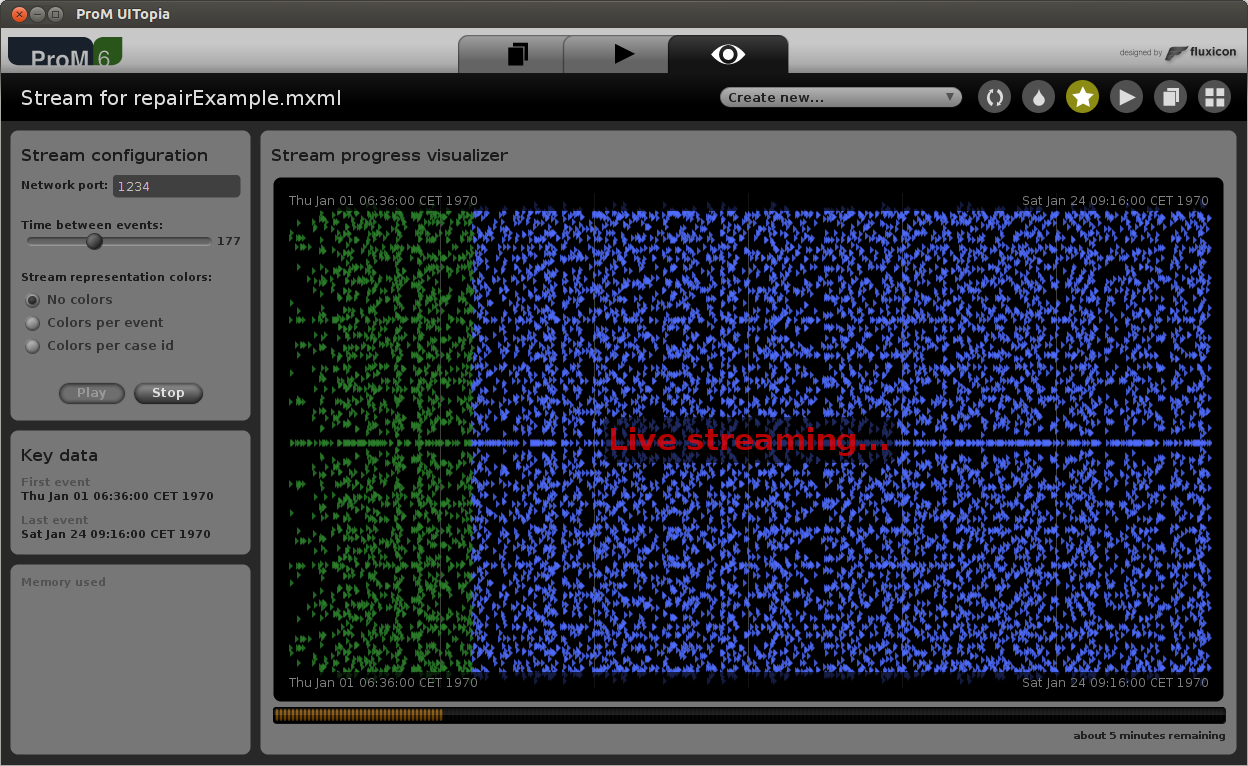} \\[.5em]
	\includegraphics[width=.49\textwidth]{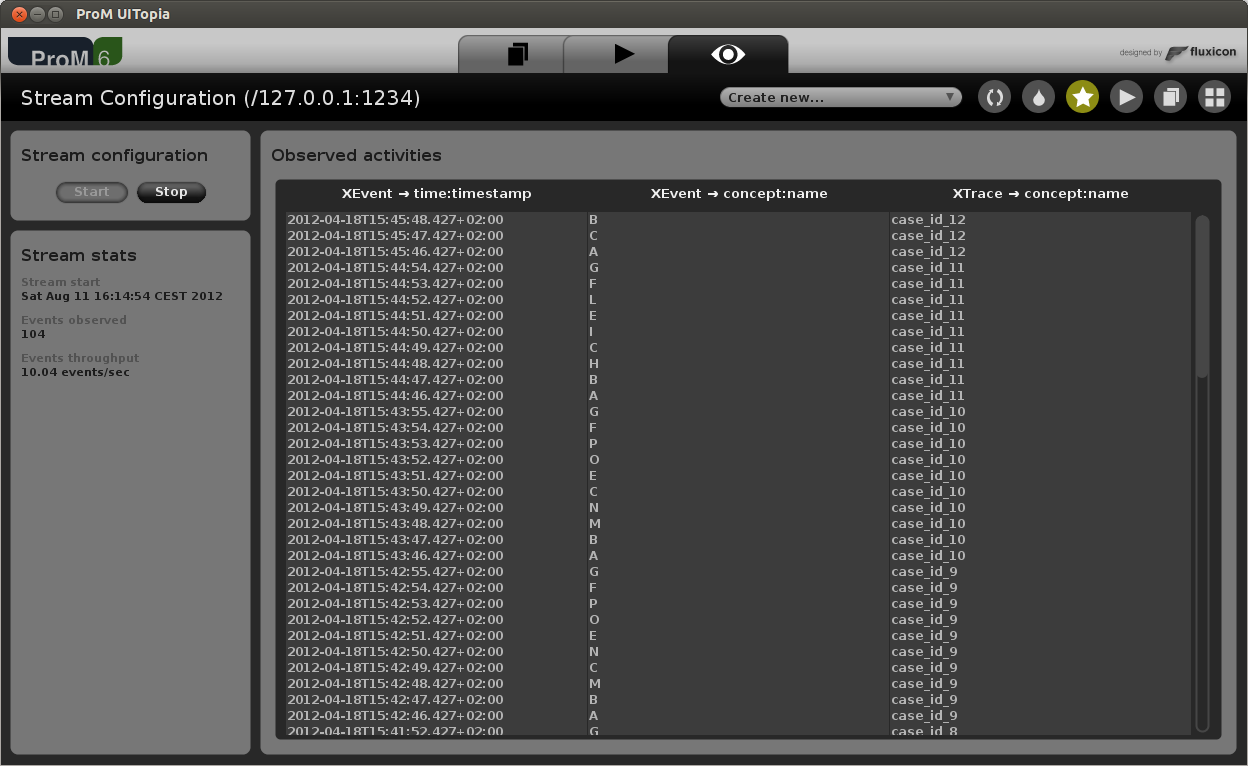} \hfill
	\includegraphics[width=.49\textwidth]{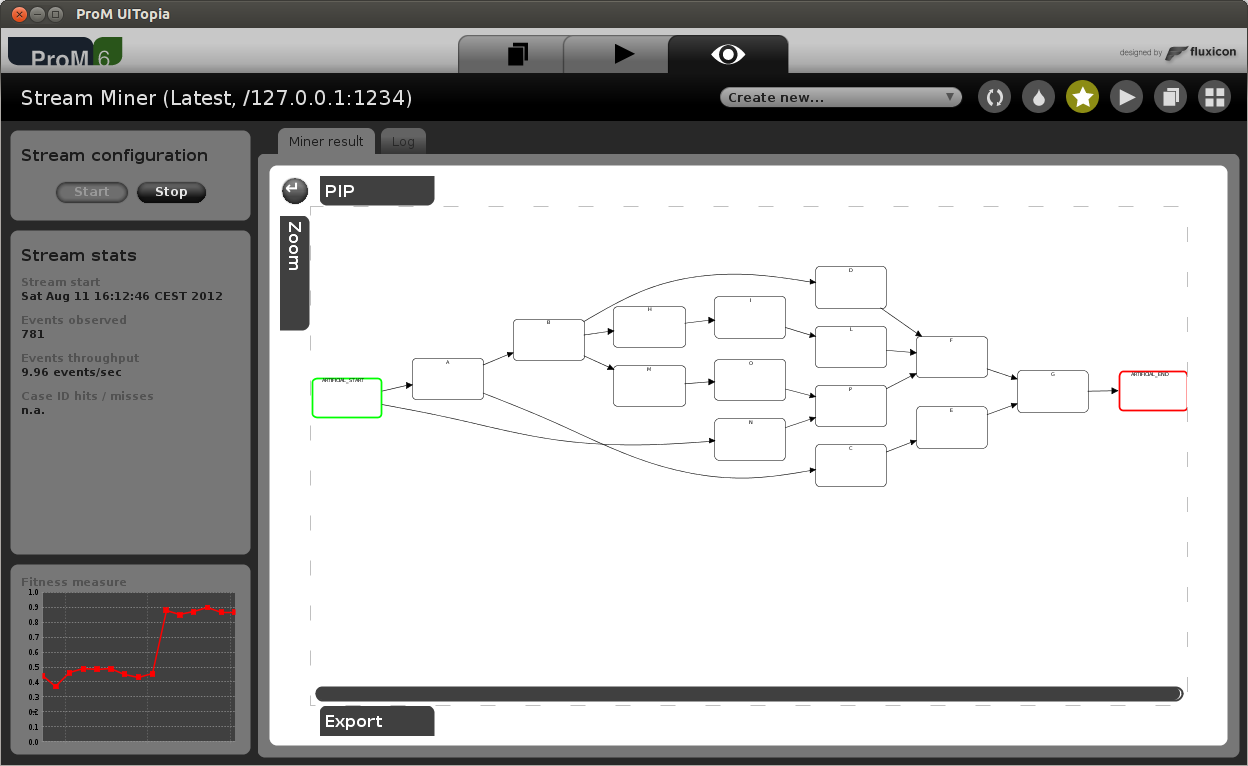}
	\caption{Screenshots of four implemented ProM plugins. The first image (\emph{top left}) shows the logs merger (it is possible to define the overlap level of the two logs); the second image (\emph{top right}) represents the log streamer, the bottom left image is the stream tester and the image at the bottom right shows the Online HM.}
	\label{fig:screenshots}
\end{figure}

\cref{fig:screenshots} contains three screenshots of the ProM plugins implemented.
The first image, on top, contains the process streamer: the left bar describes the stream configuration options (such as the speed or the network port for new connections), the central part contains a representation of the log as a dotted chart \cite{Song2007} (the $x$ axis represents the time, and each point with the same timestamp $x$ value is an event occurred at the same instant). Blue dots are the events that are not yet sent (future events), green ones are the events already streamed (past events). It is possible to change the color of the future events so that every event referring to the same activity or to the same process instance has the same color.
The figure in the middle contains the Stream Tester: each event of a stream is appended to this list, which shows the timestamp of the activity, its name and its case id. The left bar contains some basic statistics (i.e. beginning of the streaming session, number of events observed and average number of events observed per second).
The last picture, at the bottom, represents the Online HM miner. This view can be divided into three parts: the central part, where the process representation is shown (in this case, as a Causal Net); the left bar contains, on top, buttons to start/stop the miner plus some basic statistics (i.e., beginning of the streaming session, number of events observed and average number of events observed per second); at the bottom, there is a graph which shows the evolution of the fitness measure.

Moreover, Command-Line Interface (CLI) versions of the miners are available too\footnote{See \url{http://www.processmining.it} for more details.}. In these cases, events are read from a static file (one event per line) and the miners update the model (this implementation realizes an incremental approach of the algorithm). These implementations are can be run in batch and are used for automated experimentation.

\section{Results} \label{sec:results}

The algorithms presented in this paper have been tested using four datasets: event logs from two artificial processes (one stationary and one evolving); a synthetic example; and a real event log.

\subsection{Models description}

\begin{figure}
	\centering
	\includegraphics[width=\textwidth]{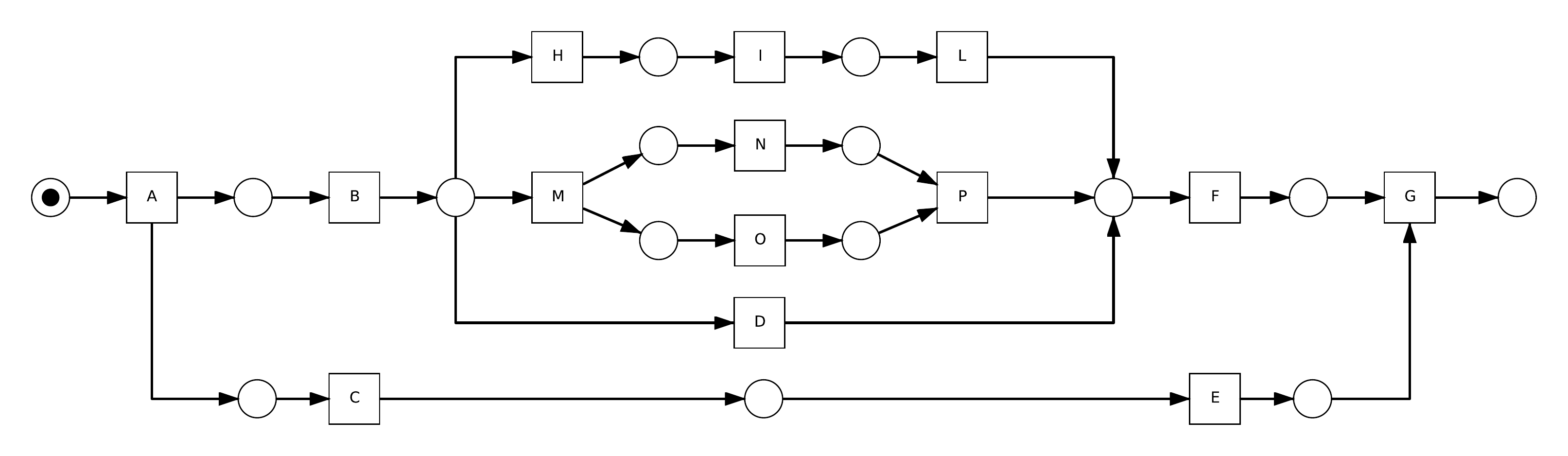}
	\caption{Model 1. Process model used to generate the stationary stream.}
	\label{fig:model:4}
\end{figure}
\begin{figure}
	\includegraphics[width=\textwidth]{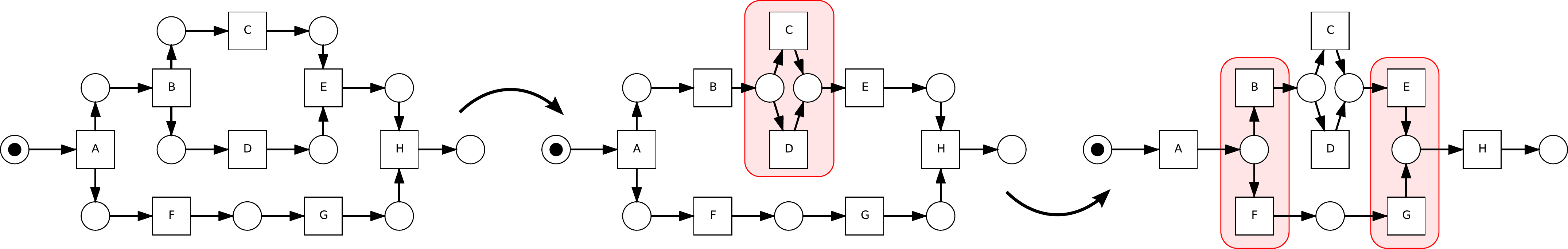}
	\caption{Model 2. The three process models that generate the evolving stream. Red rounded rectangles indicate areas subject to modification.}
	\label{fig:model:6}
\end{figure}

The two artificial processes are shown in \cref{fig:model:4} and \cref{fig:model:6}, both are described in terms of a as Petri Net. The first one describes the complete model (Model 1) that is simulated to generate the stationary stream. The second one (Model 2) presents the three models which are used to generate three logs describing an evolving stream. In this case, the final stream is generated considering the hard shift of the three logs generated from the single process executions.

\begin{figure}
	\includegraphics[width=\textwidth]{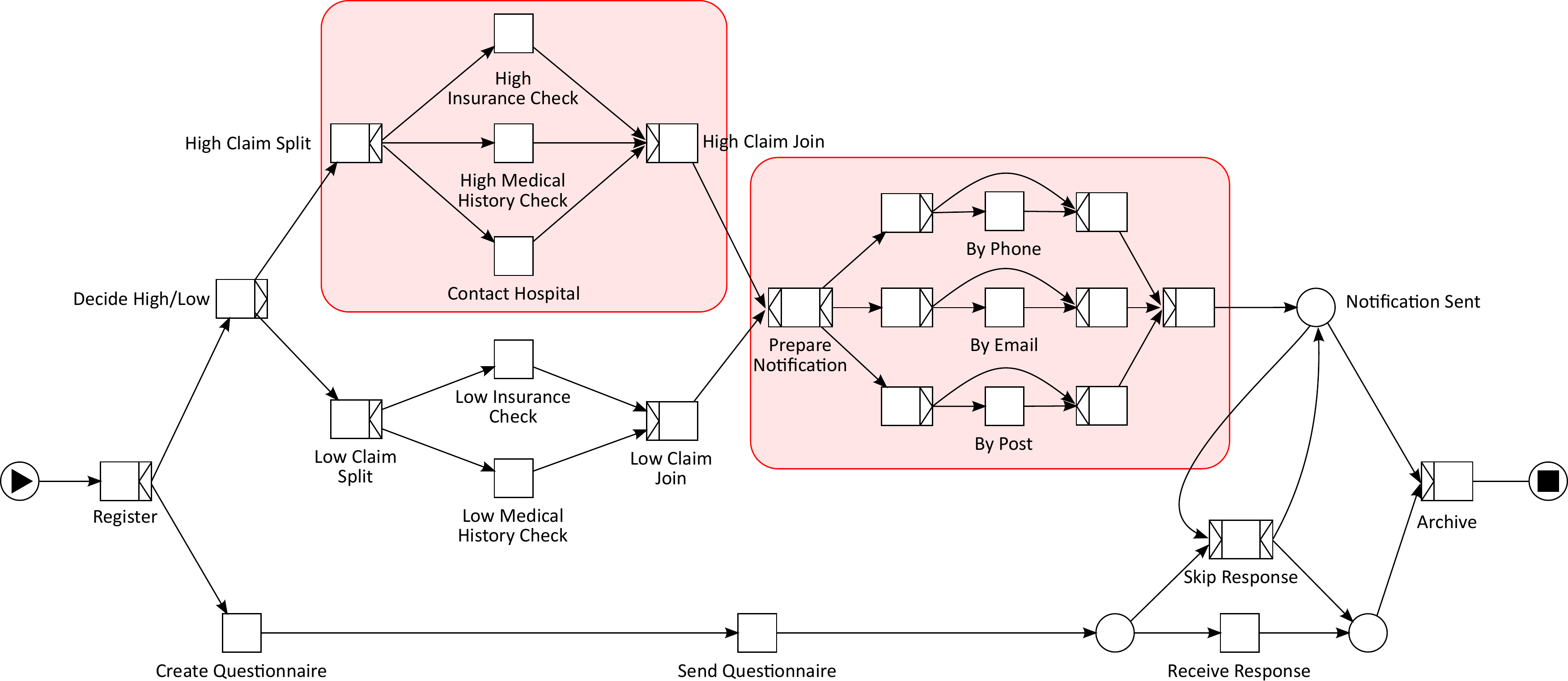}
	\caption{Model 3. The first variant of the third model. Red rounded rectangles indicate areas that will be subject to the modifications.}
	\label{fig:model:3}
\end{figure}

The synthetic example (Model 3) is reported in \cref{fig:model:3}. This example is taken from \cite[Chap. 5]{Bose2012} and is expressed as a YAWL \cite{Vanderaalst2005} process. This model describes a possible health insurance claim process of a travel agency. This example is modified 4 times so, at the end, the stream contains traces from 5 different processes. Also in this case the type of drift is shift. Due to space limitation, only the first process is presented and the red rectangles indicate areas that are modified over time.

\subsection{Algorithms evaluation}

The streams generated from the described models are used for the evaluation of the presented techniques.
There are various metrics to evaluate the process models with respect to an event log. Typically four quality dimensions are considered for comparing model and log:
\begin{inparaenum}[\itshape (a)]
	\item fitness;
	\item simplicity;
	\item precision; and
	\item generalization \cite{VanderAalst2011,VanderAalst2012}.
\end{inparaenum}
In order to measure how well the model describes the log without allowing the reply of traces not generated by the target process, here we measure the performance both in terms of \emph{fitness} (computed according to \cite{Adriansyah2011a}) and in terms of \emph{precision} (computed according to \cite{Munoz-Gama2010}).
The first measure reaches its maximum when all the traces in the log are properly replied by the model, while the second one prefers models that describe a ``minimal behavior'' with respect to all the models that can be generated starting from the same log.
In all experiments, the \emph{fitness} and \emph{precision} measures are computed over the last $x$ observed events (where $x$ varies according to log size), $q$ refers to the maximum size of queues,  and default parameters of Heuristics Miner, for model generation, are used.

The main characteristics of the three streams are:
\begin{itemize}
	\item \emph{Streams for Model 1}: 3448 events, describing 400 cases;
	\item \emph{Streams for Model 2}: 4875 events, describing 750 cases (250 cases and 2000 events for the first process model, 250 cases and 1750 events for the second, and 250 cases with 1125 events for the third one);
	\item \emph{Stream for Model 3}: 58783 events, describing 6000 cases (1199 cases and 11838 events for the first variant; 1243 cases and 11690 events for the second variant; 1176 cases and 12157 events for the third variant; 1183 cases and 10473 events for the fourth variant; and 1199 cases and 12625 events for the fifth variant).
\end{itemize}
We compare the basic approaches versus the different online versions of stream miner, against the different streams.


\begin{figure}[h]
	\centering
	\includegraphics[width=.48\textwidth]{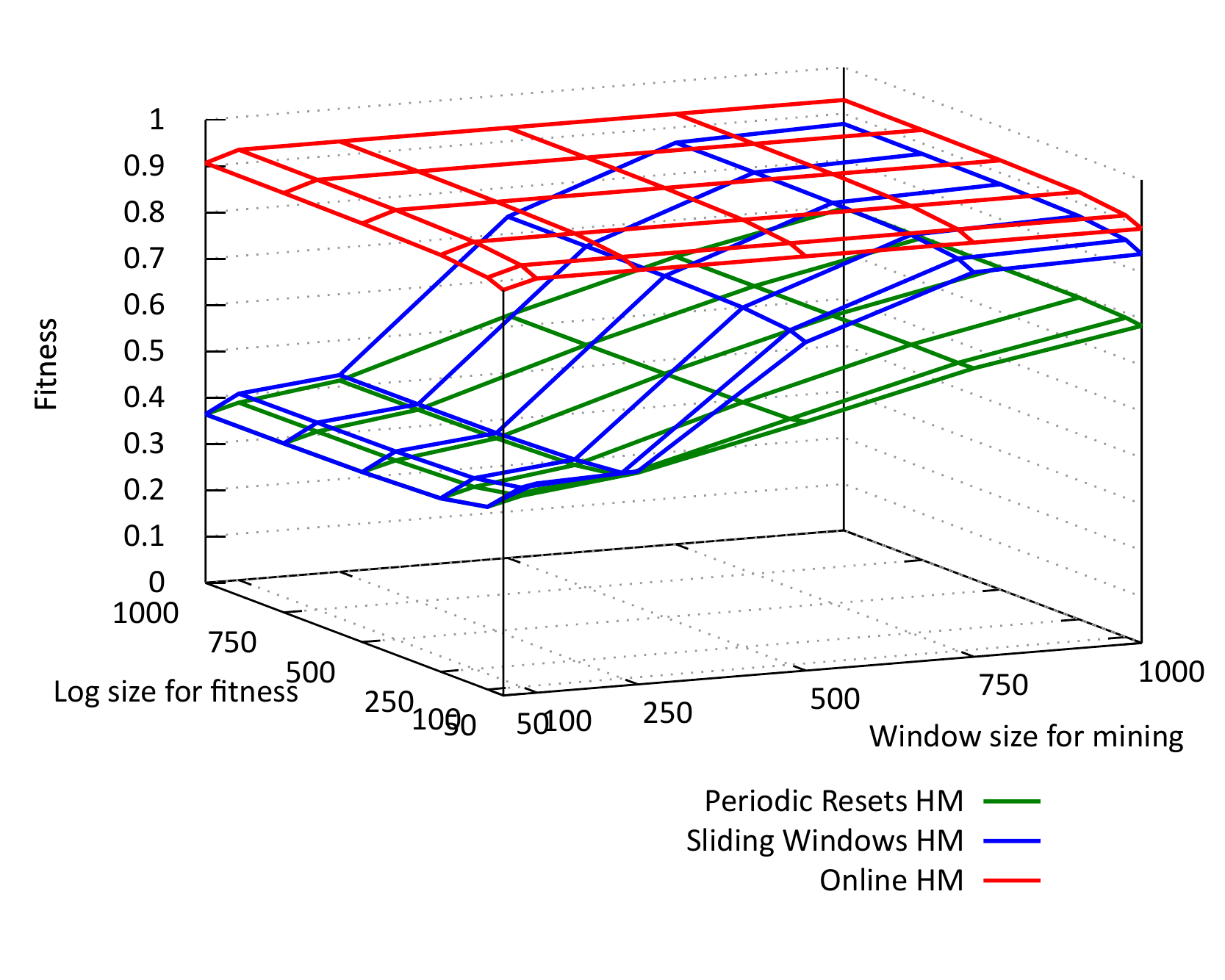}\hspace{.1\textwidth}
	\includegraphics[width=.35\textwidth]{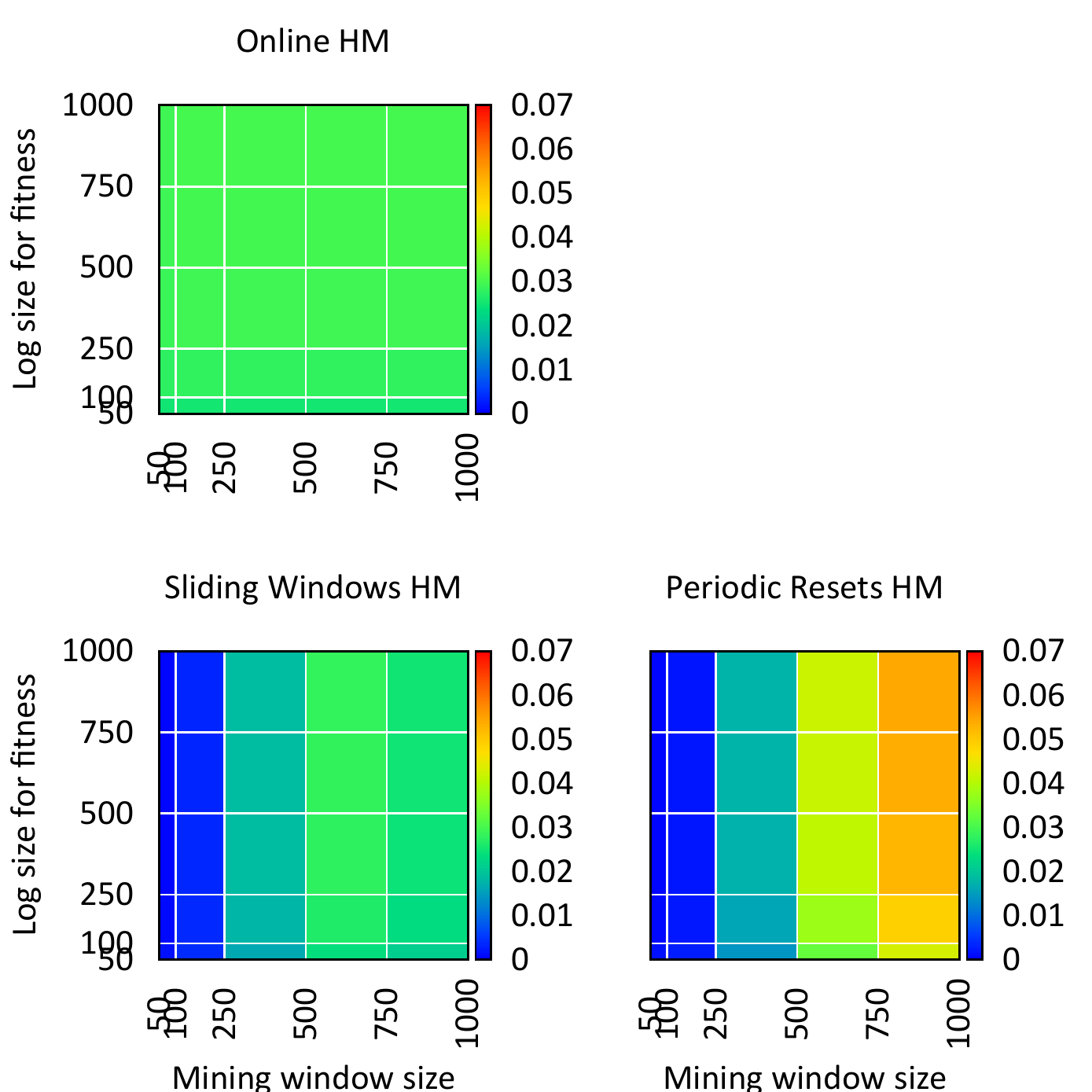}\\[1em]
	\includegraphics[width=.48\textwidth]{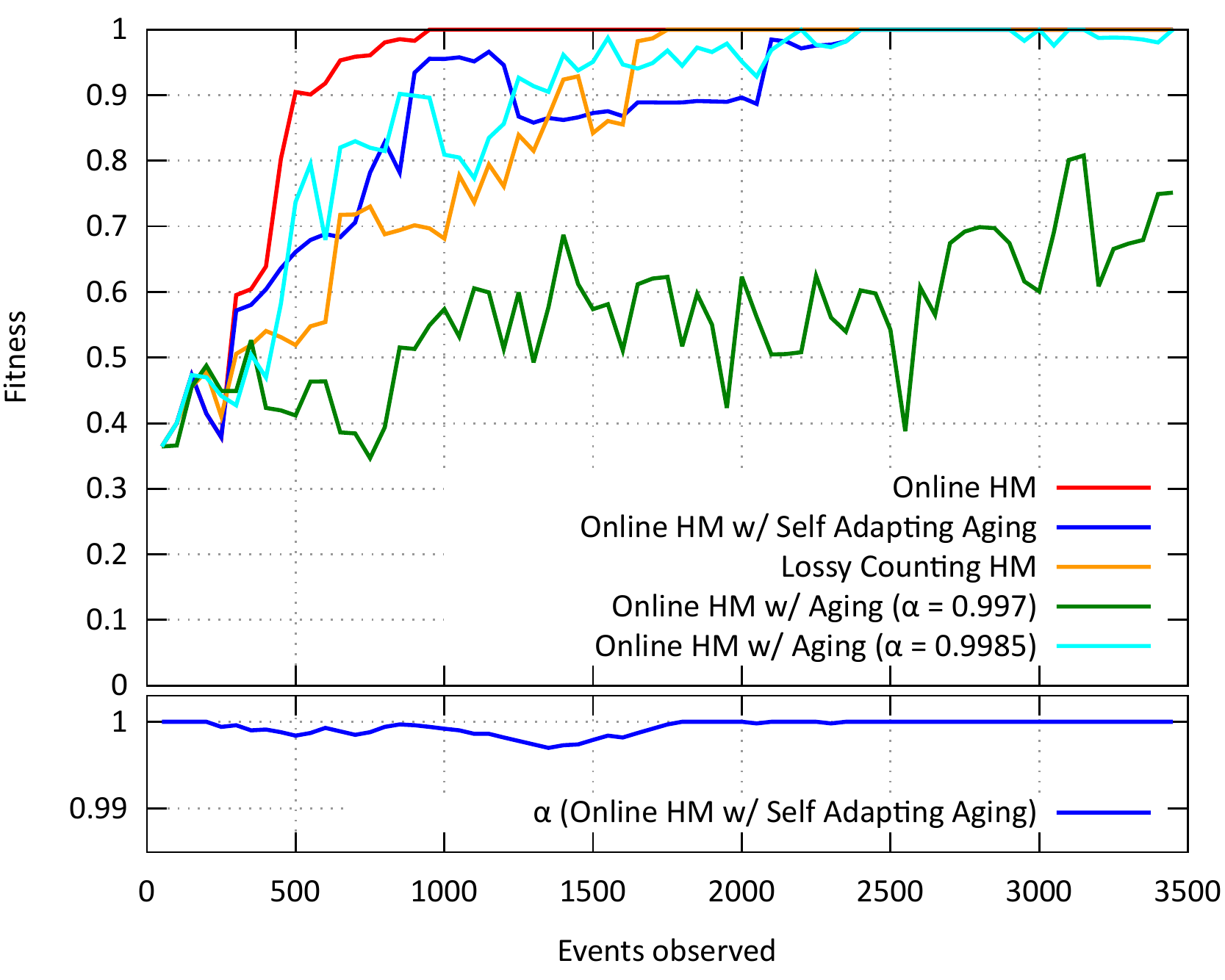}\hfill
	\includegraphics[width=.48\textwidth]{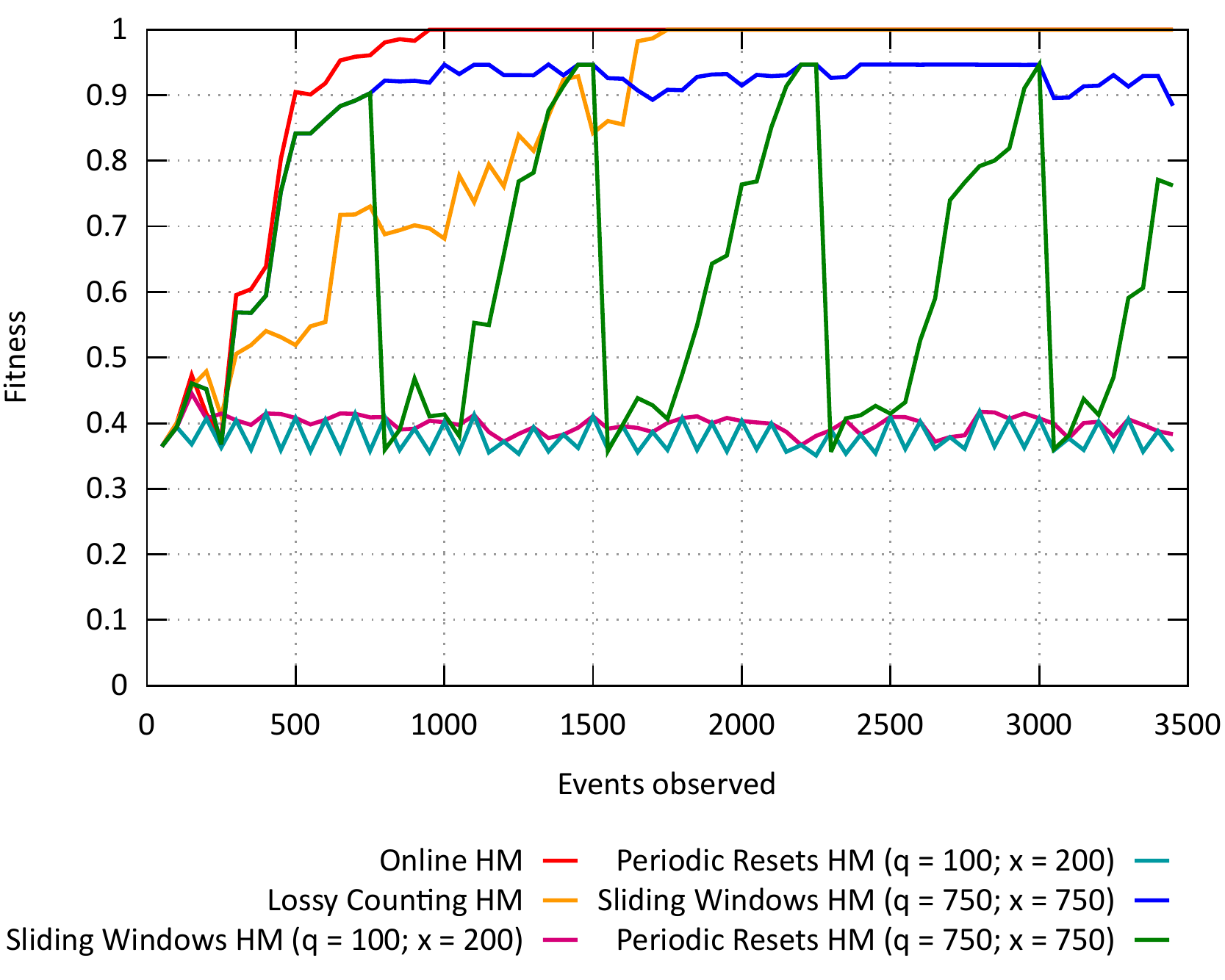}
	\caption{Aggregated experimental results for five streams generated by Model 1. \emph{Top:} average (\emph{left}) and variance (\emph{right}) values of fitness measures for basic approaches and the Online HM. \emph{Bottom:} evolution in time of average fitness for Online HM with queues size 100 and log size for fitness 200; curves for HM with Aging ($\alpha = 0.9985$ and $\alpha = 0.997$), HM with Self Adapting (evolution of the $\alpha$ value is shown at the bottom), Lossy Counting and different configurations of the basic approaches are reported as well.}
	\label{fig:fitness-performance:model1}
\end{figure}

\cref{fig:fitness-performance:model1} reports the aggregated experimental results for five streams generated by Model 1. The two charts on top report the averages of the fitness (left) and the variance (right) for the two basic approaches and the Online HM. The presented values are calculated varying the size of the window used to perform the mining (in the case of Online HM it's the size of the queues), and the number of events used to calculate the fitness measure (i.e. only the latest $x$ events are supposed to fit the model).
For each combination (number of events for the mining and number of events for fitness computation) a run of the miner has been executed (for each of the five streams) and the average and variance values of the fitness (which is calculated every 50 events observed) are reported. It is clear, from the plot, that the Online HM is outperforming the basic approaches, both in terms of speed in finding a good model and in terms of fitness of the model itself.
The bottom of the figure presents, on the left hand side, a comparison of the evolution of the average fitness of the Online HM, the HM with Aging ($\alpha = 0.9985$ and $\alpha = 0.997$), the HM with Self Adapting approach and Lossy Counting. For these runs a queues size of $100$ have been used and, for the fitness computation, the latest $200$ events are considered. In this case, the lossy counting considers an error value $\epsilon = 0.01$. The right hand side of \cref{fig:fitness-performance:model1} compares the basic approaches, with different window and fitness sizes against the Online HM and the Lossy Counting approach.
As expected, since there is no drift, the Online HM outperforms the versions with aging. In fact, HM with aging beside being less stable, degrades performances as the value of $\alpha$ decreases, i.e. less importance is given to less recent events. This is consistent with the bad performance reported for the basic approaches which can exploit only the most recent events contained in the window.
The self adapting strategy, after an initial variation of the $\alpha$ parameter, is able to converge to the Online HM by eventually choosing a value of $\alpha$ equals to 1.


\begin{figure}[h]
	\centering
	\includegraphics[width=.48\textwidth]{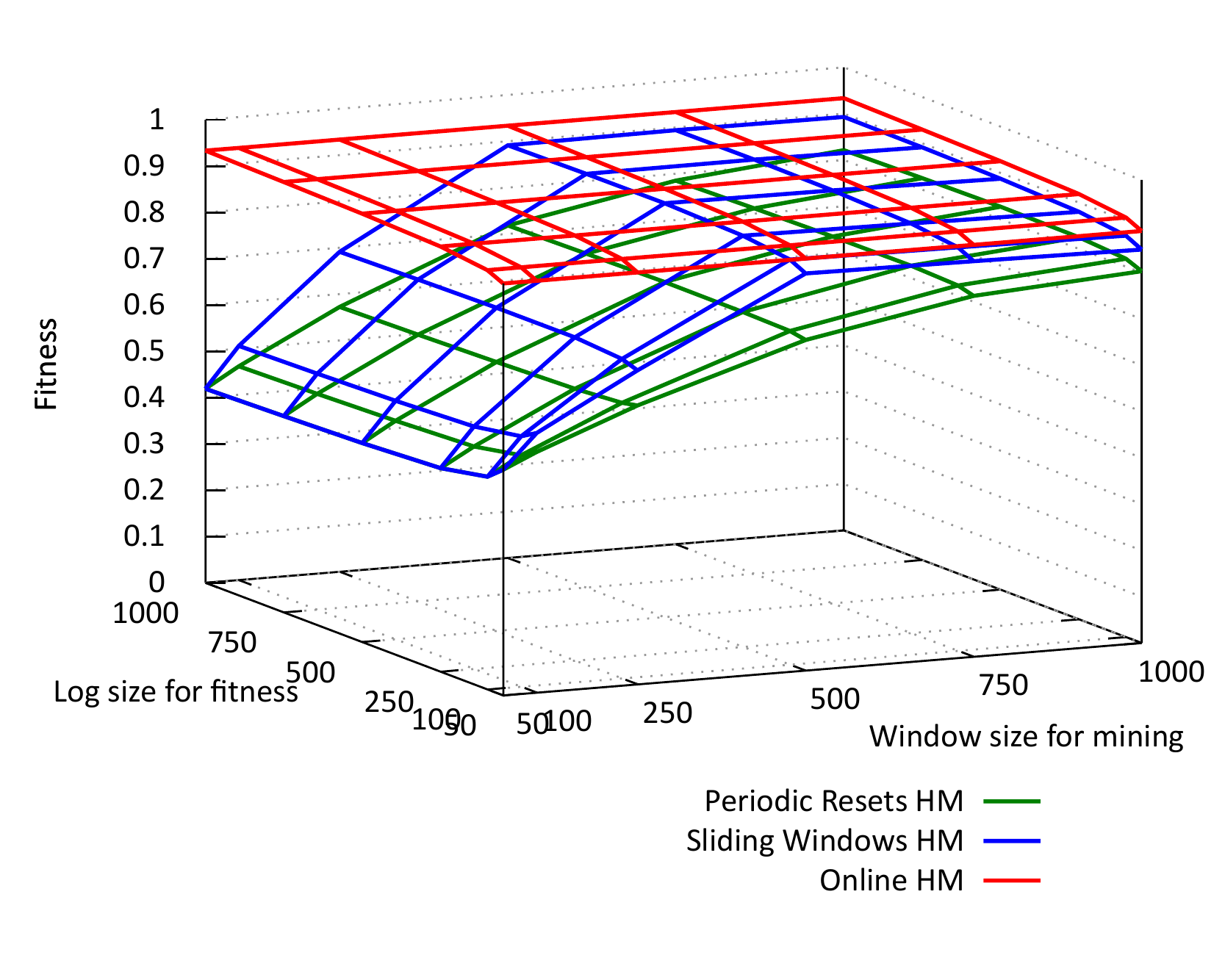}\hspace{.1\textwidth}
	\includegraphics[width=.35\textwidth]{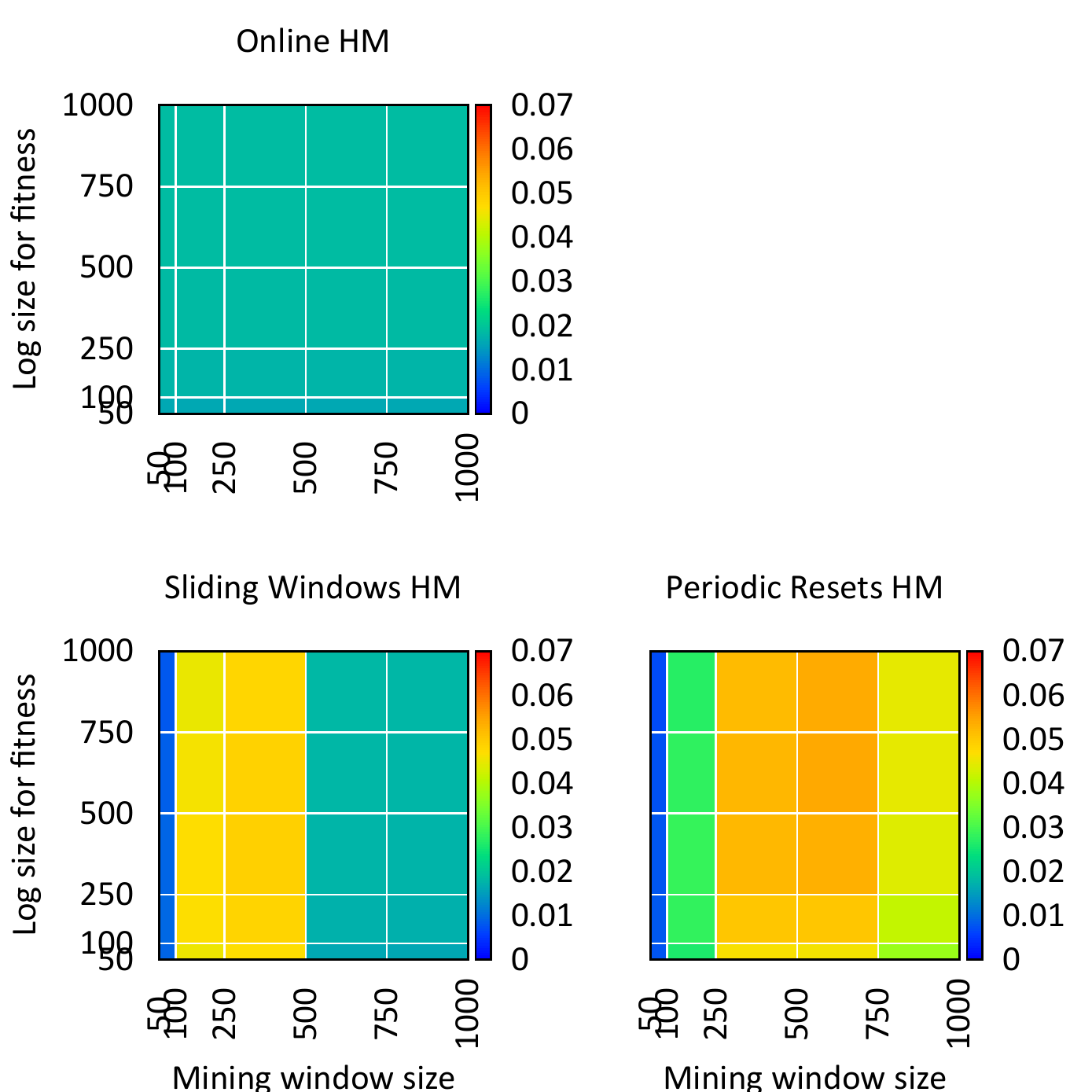}\\[1em]
	\includegraphics[width=.48\textwidth]{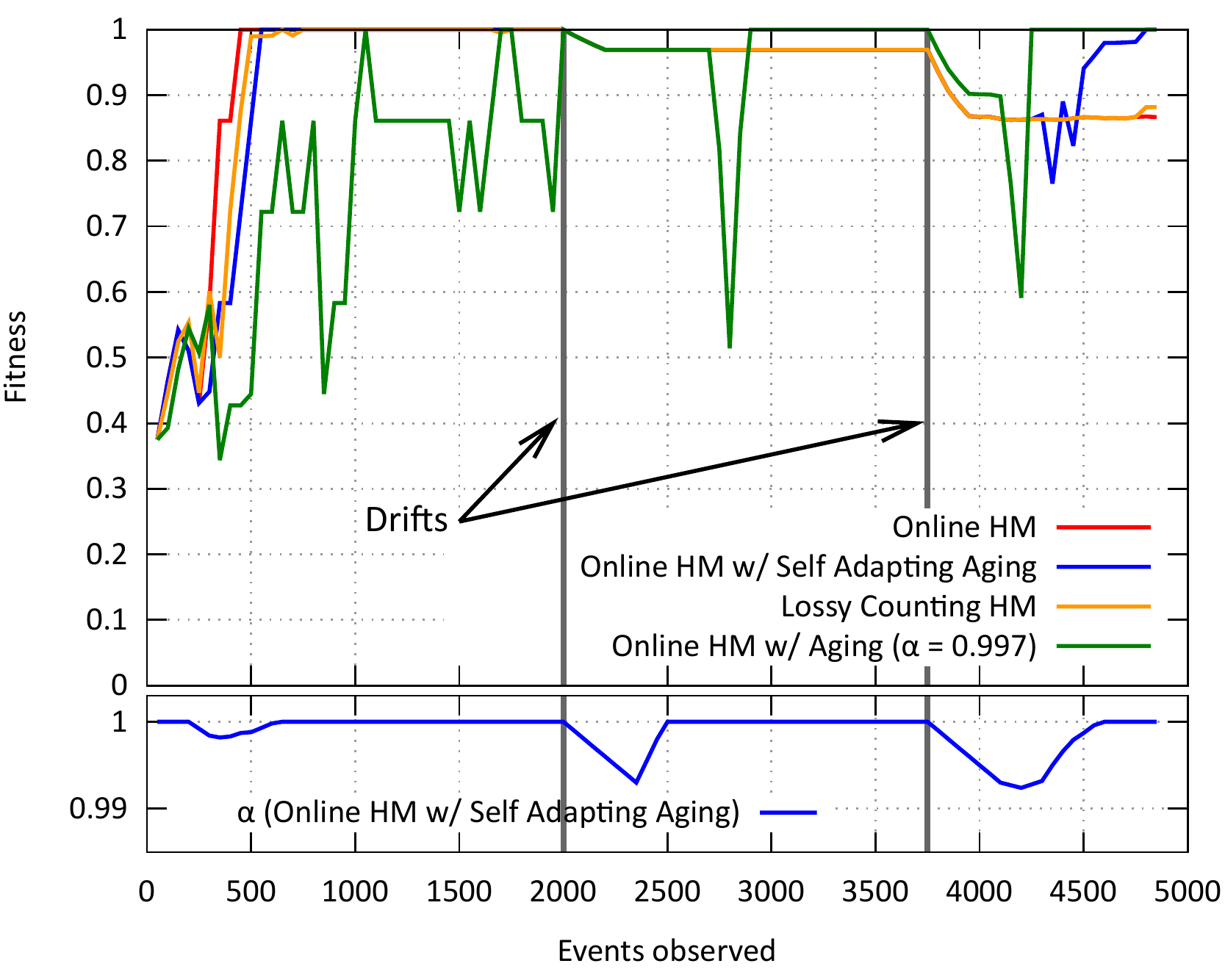}\hfill
	\includegraphics[width=.48\textwidth]{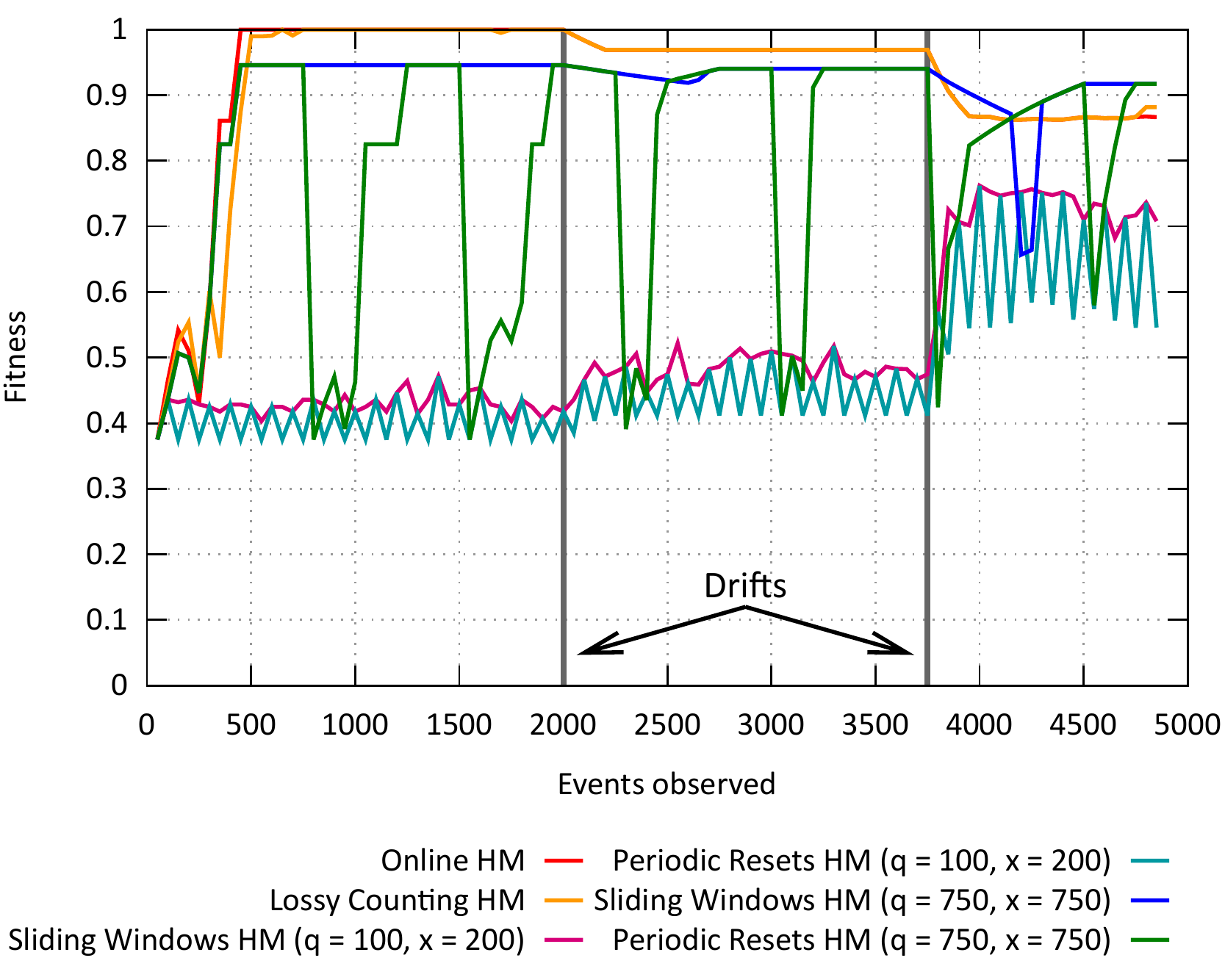}
	\caption{Aggregated experimental results for five streams generated by evolving Model 2. \emph{Top:} average (\emph{left}) and variance (\emph{right}) values of fitness measures for basic approaches and Online HM. \emph{Bottom:} evolution in time of average fitness for Online HM with queues size 100 and log size for fitness 200; curves for HM with Aging ($\alpha = 0.997$), HM with Self Adapting (evolution of the $\alpha$ value is shown at the bottom), Lossy Counting and different configurations of the basic approaches are reported as well. Drift occurrences are marked with vertical bars.}
	\label{fig:fitness-performance:model2}
\end{figure}

\cref{fig:fitness-performance:model2} reports the aggregated experimental results for five streams generated by Model 2. In this case we adopted exactly the same experimental setup, procedure and results presentation as described before. In addition, the occurrences of drift are marked.
As expected, the performance of Online HM decreases at each drift, while HM with Aging is able to recover from the drifts. The price paid for this ability is a less stable behavior. HM with Self Adapting aging seems to be the right compromise being eventually able to recover from the drifts while showing a stable behavior. The $\alpha$ curve shows that the self adapting strategy seems to be able to detect the concept drifts.


\begin{figure}[h]
	\centering
	\includegraphics[width=\textwidth]{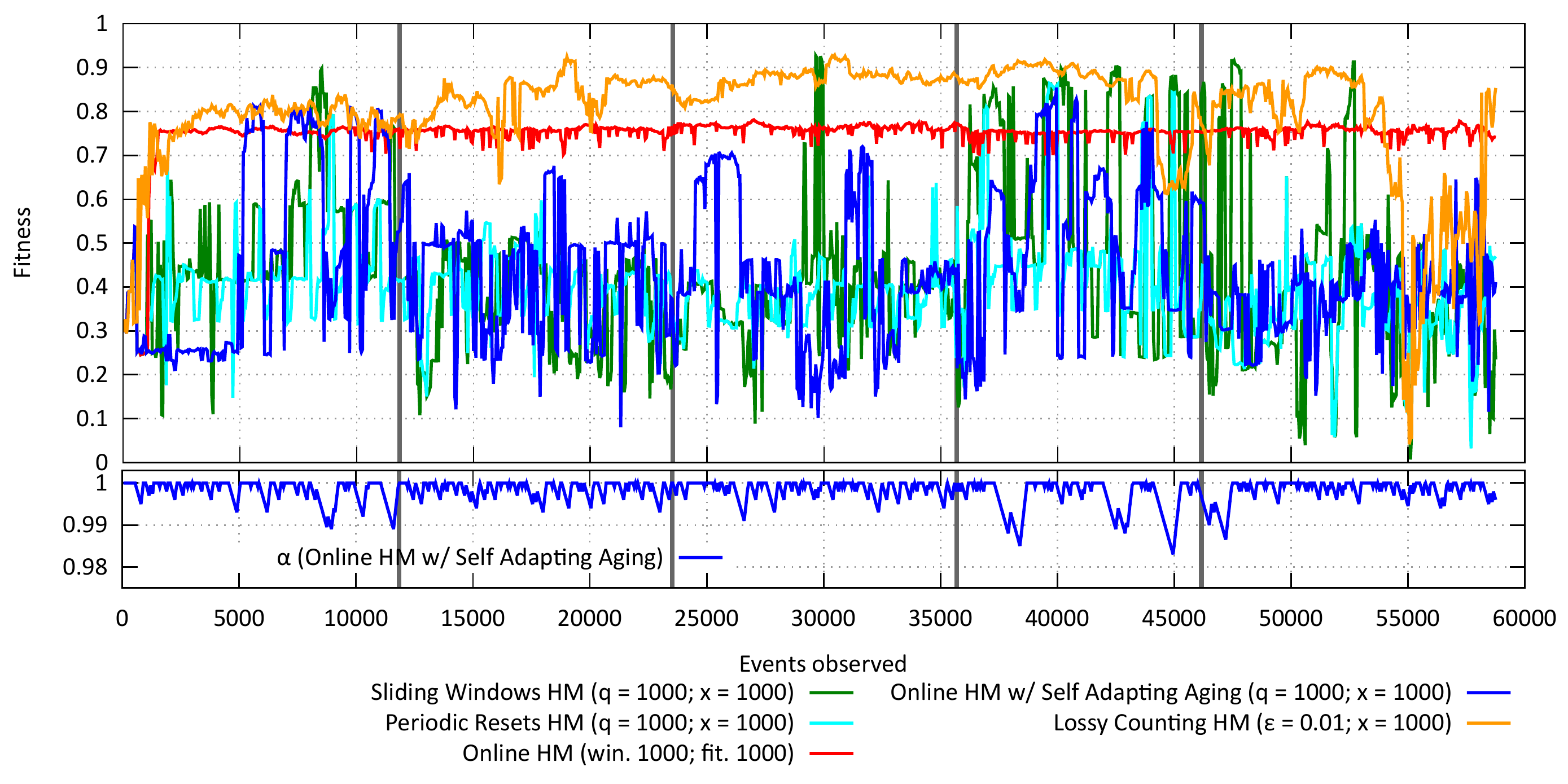} \\[2em]
	\includegraphics[width=\textwidth]{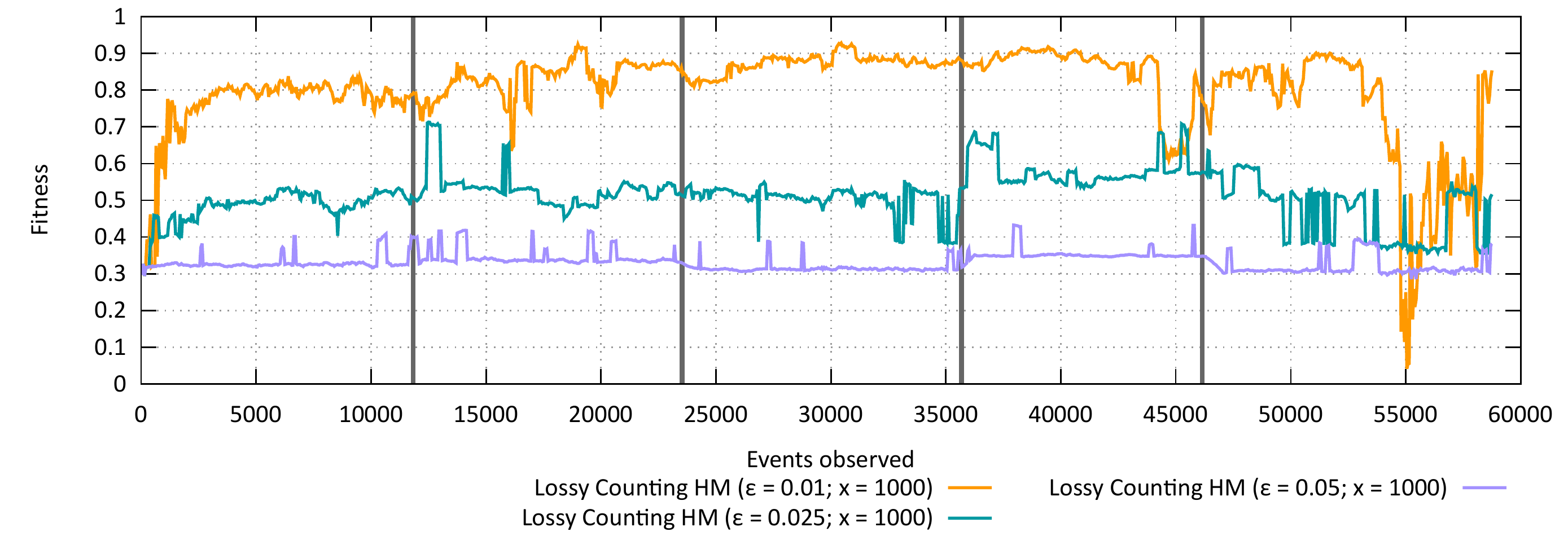}
	\caption{Detailed results of the basic approaches, Online HM, HM with Self Adapting and Lossy Counting (with different configurations) on data of Model~3. Vertical gray lines indicate points where concept drift occur.}
	\label{fig:fitness-performance:insurance}
\end{figure}

The Model 3, with the synthetic example, has been tested with the basic approaches (Sliding Windows and Periodic Resets), the Online HM, the HM with Self Adapting and the Lossy Counting and the results are presented in \cref{fig:fitness-performance:insurance}. In this case, the Lossy Counting and the Online HM outperform the other approaches. Lossy Counting reaches higher fitness values, however Online HM is more stable and seems to better tolerate the drifts. The basic approaches and the HM with Self Adapting, on the other hand, are very unstable; moreover it is interesting to note that the value of $\alpha$, of the HM with Self Adapting, is always close to 1. This indicates that the short stabilities of the fitness values are sufficient to increase $\alpha$, so the updating policy (i.e. the increment/decrement speed of $\alpha$) presented, for this particular case, seems to be too fast. The second graph, on the bottom, presents three runs of the Lossy Counting, with different values for $\epsilon$.
As expected, the lower the value of the accepted error, the better the performances.

Due to the size of this dataset, it is interesting to evaluate the performance of the approaches also in terms of space and time requirements.

\begin{figure}
	\centering
	\includegraphics[width=\textwidth]{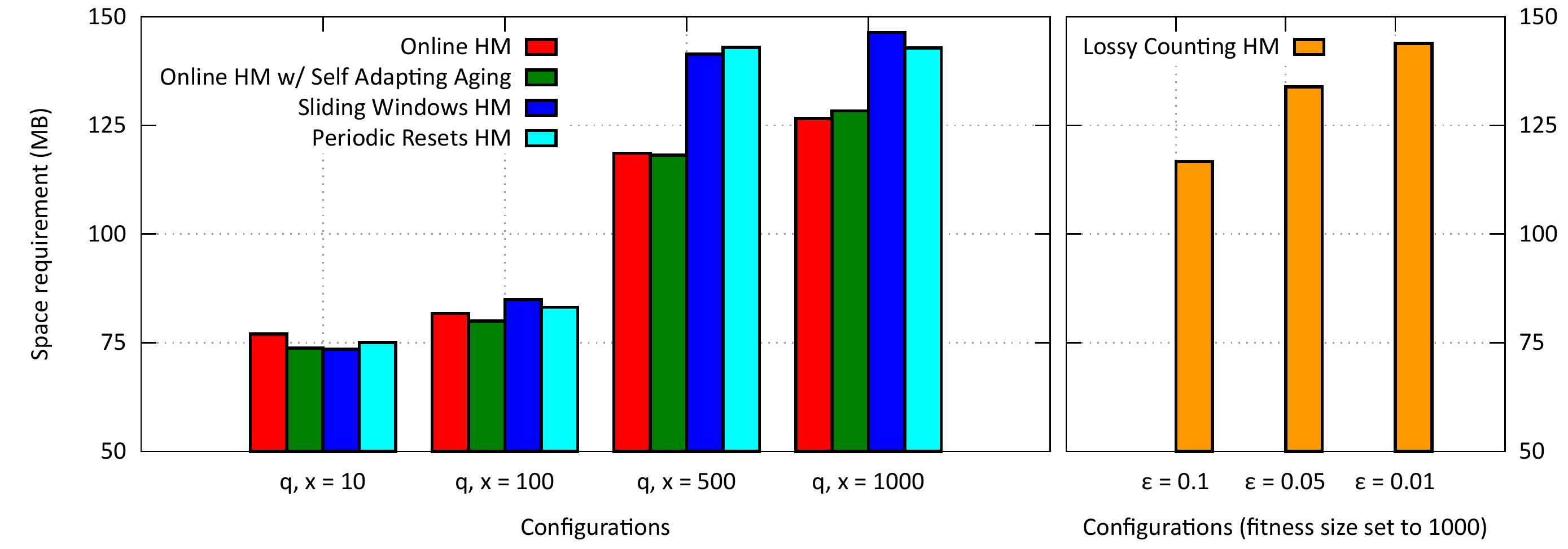}
	\caption{Average memory requirements, in MB, for a complete run over the entire log of Model 3, of the approaches (with different configurations).}
	\label{fig:memory-performance:insurance}
\end{figure}
\cref{fig:memory-performance:insurance} presents the average memory required by the miner during the processing of the entire log. Different configurations are tested, both for the basic approaches with the Online HM and the HM with Self Adapting, and the Lossy Counting algorithm. Clearly, as the windows grow, the space requirement grows too. For what concerns the Lossy Counting, again, as the $\epsilon$ value (accepted error) becomes lower, more space is required. If we pick the Online HM with window 1000 and the Lossy Counting with $\epsilon$ 0.01 (from \cref{fig:fitness-performance:insurance}, both seem to behave similarly) the Online HM consumes less memory: it requires 128.3 MB whereas the Lossy Counting needs 143.8.
\begin{figure}
	\centering
	\includegraphics[width=\textwidth]{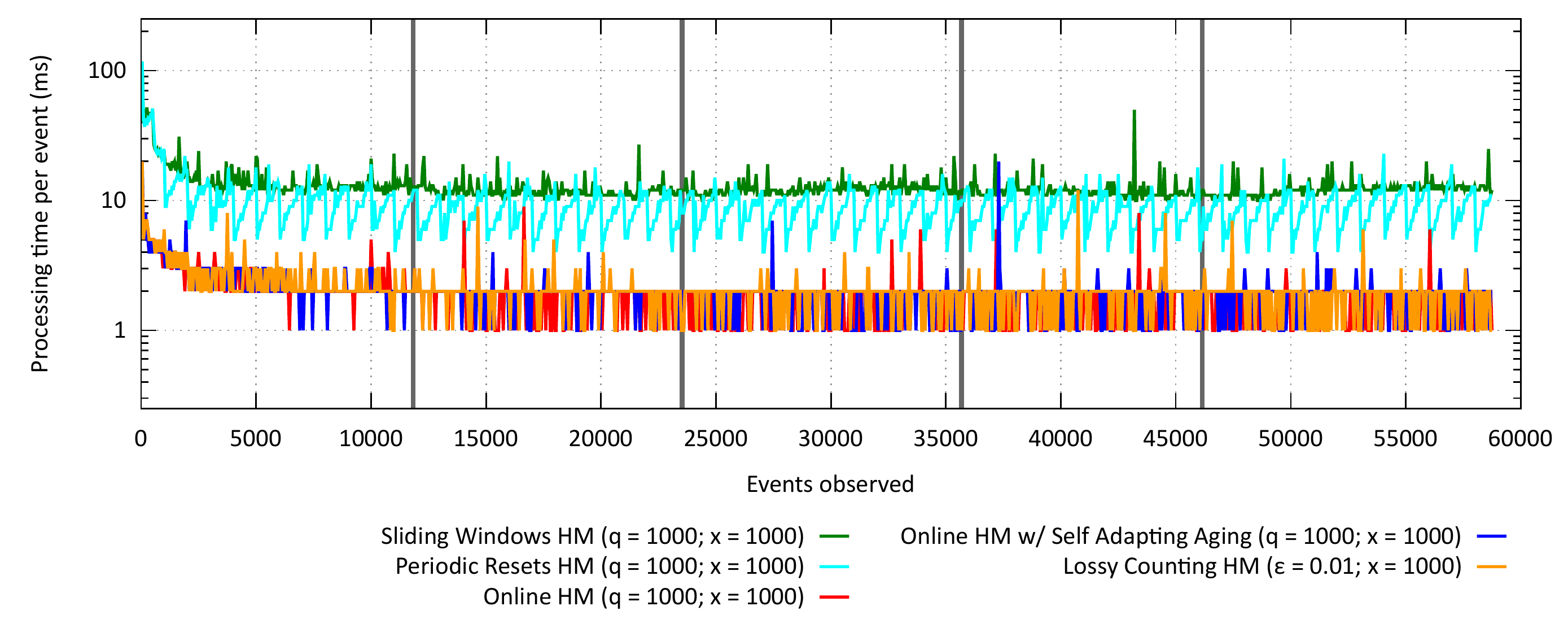} \\[2em]
	\includegraphics[width=\textwidth]{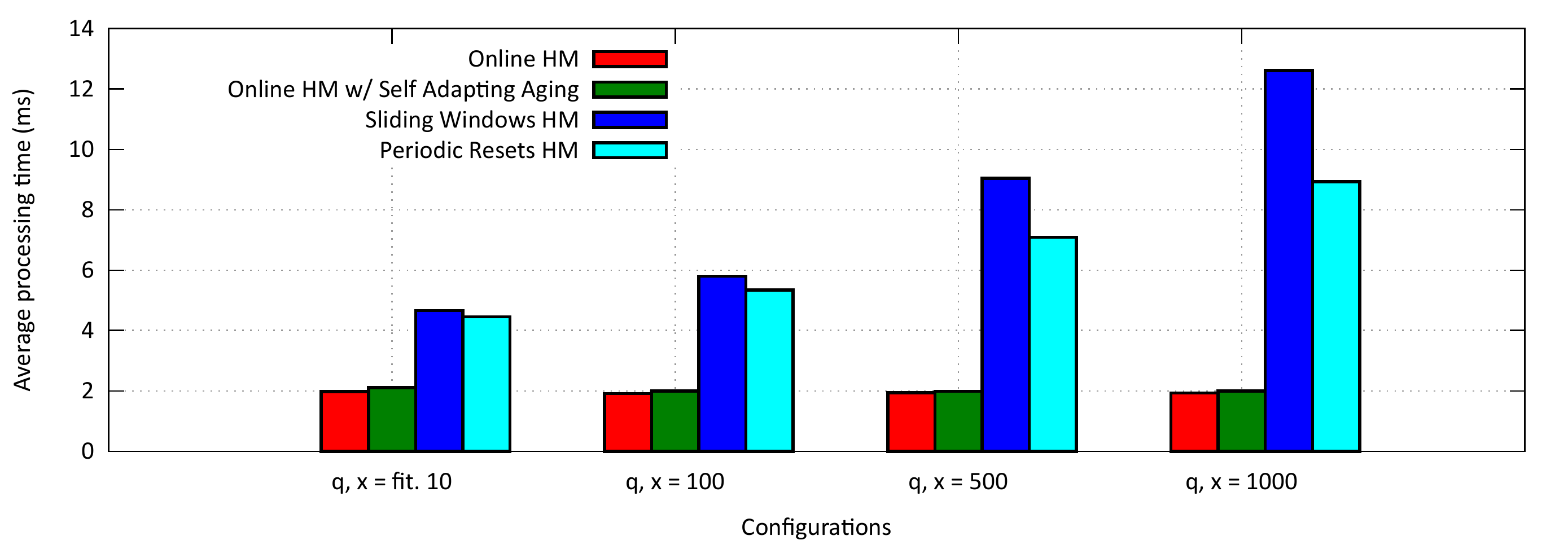}
	\caption{Time performances over the entire log of Model 3. \emph{Top:} time required to process a single event by different algorithms (logarithmic scale). Vertical gray lines indicate points where concept drift occur. \emph{Bottom:} average time required to process an event over the entire log, with different configurations of the algorithms.}
	\label{fig:time-performance:insurance}
\end{figure}
\cref{fig:time-performance:insurance} shows the time performance of different algorithms and different configurations. It is interesting to note, from the chart at the bottom, that the time required by the Online and the Self Adapting is almost independent of the configurations. Instead, the basic approaches need to perform more complex operations: the Periodic Reset has to add the new event and, sometimes, it resets the log; the Sliding Window has to update the log every time a new event is observed.

In order to study the dependence of the storage requirements of Lossy Counting with respect to the error parameter $\epsilon$, we have run experiments on the same log for different values of $\epsilon$, recording the maximum size of the Lossy Counting sets during execution.
Results for $x=1000$ are reported in \cref{fig:size-fitness-precision}. Specifically, the figure compares the maximum size of the generated sets, the average \emph{fitness} value and the average \emph{precision} value. As expected, as the value of $\epsilon$ becomes larger, both the \emph{fitness} value and the sets size quickly decrease. The \emph{precision} value, on the contrary, initially decreases and then goes up to very high values. This indicates an over-specialization of the model to specific behaviors.

\begin{figure}[t]
	\centering
	\includegraphics[width=.8\textwidth]{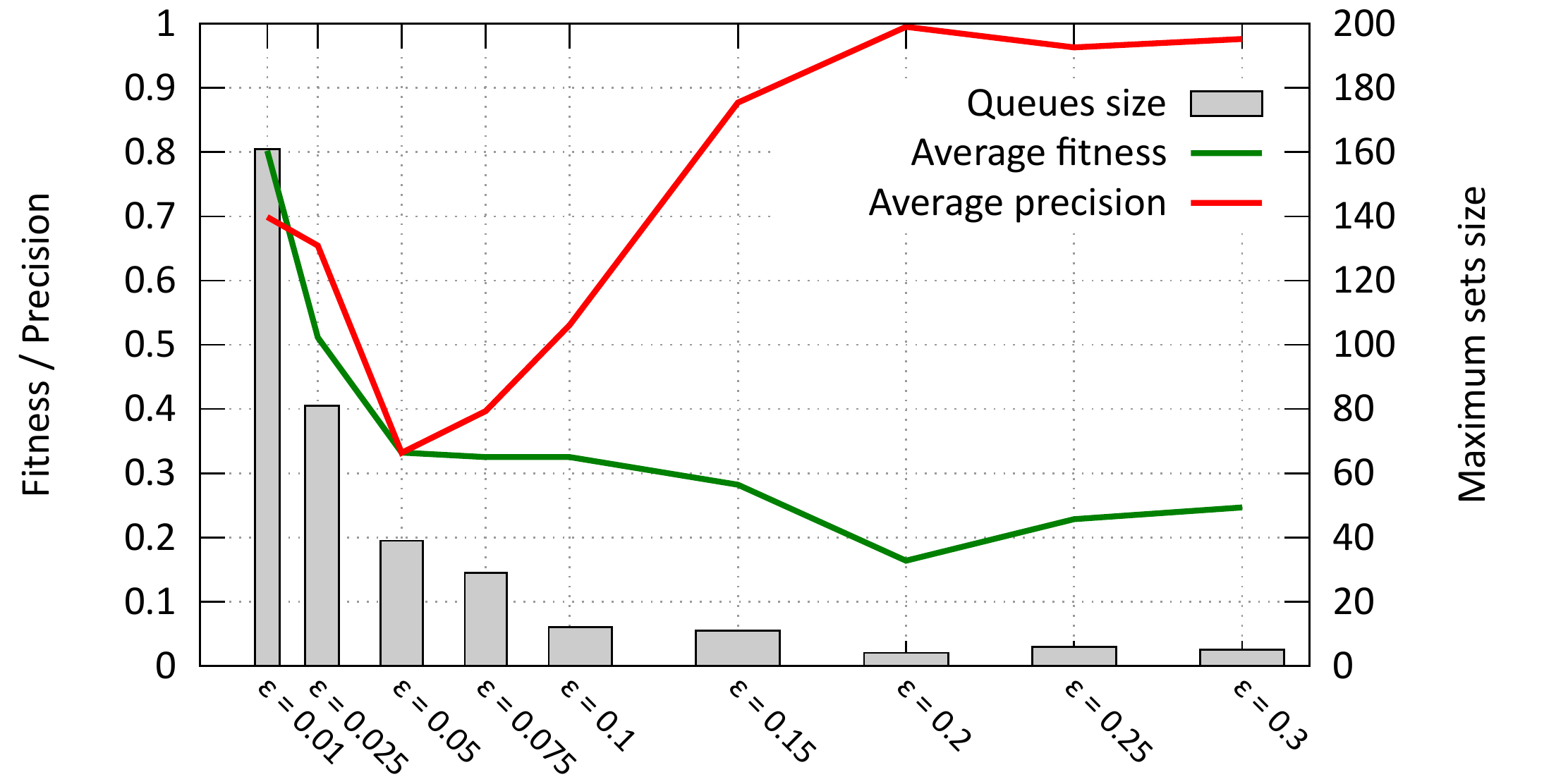}
	\caption{Comparison of the average \emph{fitness}, \emph{precision} and space required, with respect to different values of $\epsilon$ for the Lossy Counting HM executed on the log generated by the Model~3.}
	\label{fig:size-fitness-precision}
\end{figure}

\begin{table}
	\centering
	\begin{tabular}{c|r|cccc}
		\multicolumn{1}{c}{} & \multicolumn{1}{c}{} &
			\begin{footnotesize}\begin{sideways}\begin{minipage}[t]{1.5cm} \textbf{Sliding Window HM} \end{minipage}\end{sideways}\end{footnotesize} &
			\begin{footnotesize}\begin{sideways}\begin{minipage}[t]{1.5cm} \textbf{Lossy Counting HM} \end{minipage}\end{sideways}\end{footnotesize} &
			\begin{footnotesize}\begin{sideways}\begin{minipage}[t]{1.7cm} \textbf{Online HM with Aging} \end{minipage}\end{sideways}\end{footnotesize} &
			\begin{footnotesize}\begin{sideways}\textbf{Online HM} \end{sideways}\end{footnotesize} \\
		\midrule
		\multirow{3}{*}{\textbf{$q=10$}}
		& \emph{Avg. Time (ms)}	& 4.66 & 2.61 & 2.11 & 1.97 \\
		& \emph{Avg. Fitness}	& 0.32 & 0.28 & 0.32 & 0.32 \\
		& \emph{Avg. Precision}	& 0.44 & 0.87 & 0.38 & 0.38 \\
		\midrule
		\multirow{3}{*}{\textbf{$q=100$}}
		& \emph{Avg. Time (ms)}	& 5.79 & 2.85 & 1.99 & 1.91 \\
		& \emph{Avg. Fitness}	& 0.32 & 0.51 & 0.42 & 0.74 \\
		& \emph{Avg. Precision}	& 0.42 & 0.65 & 0.68 & 0.71 \\
		\bottomrule
	\end{tabular}
	\caption{Performance of different approaches with queues/sets size of $q=10$ and $q=100$ elements and $x=1000$. Online HM with Aging uses $\alpha^{1/q} = 0.9$. Time values refer to the average number of milliseconds required to process a single event with respect to Model~3.}
	\label{tbl:times-fitness-precision}
\end{table}

As an additional test, we decide to compare the proposed algorithms under extreme storage conditions which do allow only to retain limited information about the observed events. Specifically,
\cref{tbl:times-fitness-precision} reports the average time required to process a single event, average \emph{fitness} and \emph{precision} values when queues with size $10$ and $100$, respectively, are used. For Lossy Counting we have used an $\epsilon$ value which approximately
requires sets of similar sizes.  Please note that, for this log, a single process trace is longer than $10$ events so, with a queue of $10$ elements it is not possible to keep in queue all the events of a case (because events of different cases are interleaved). From the results it is clear that,
under these conditions, the order of occurrence of the algorithms in the table (column order) is inversely proportional to all the evaluation
criteria (i.e. execution time, \emph{fitness}, \emph{precision}).


\begin{figure}
	\centering
	\includegraphics[width=\textwidth]{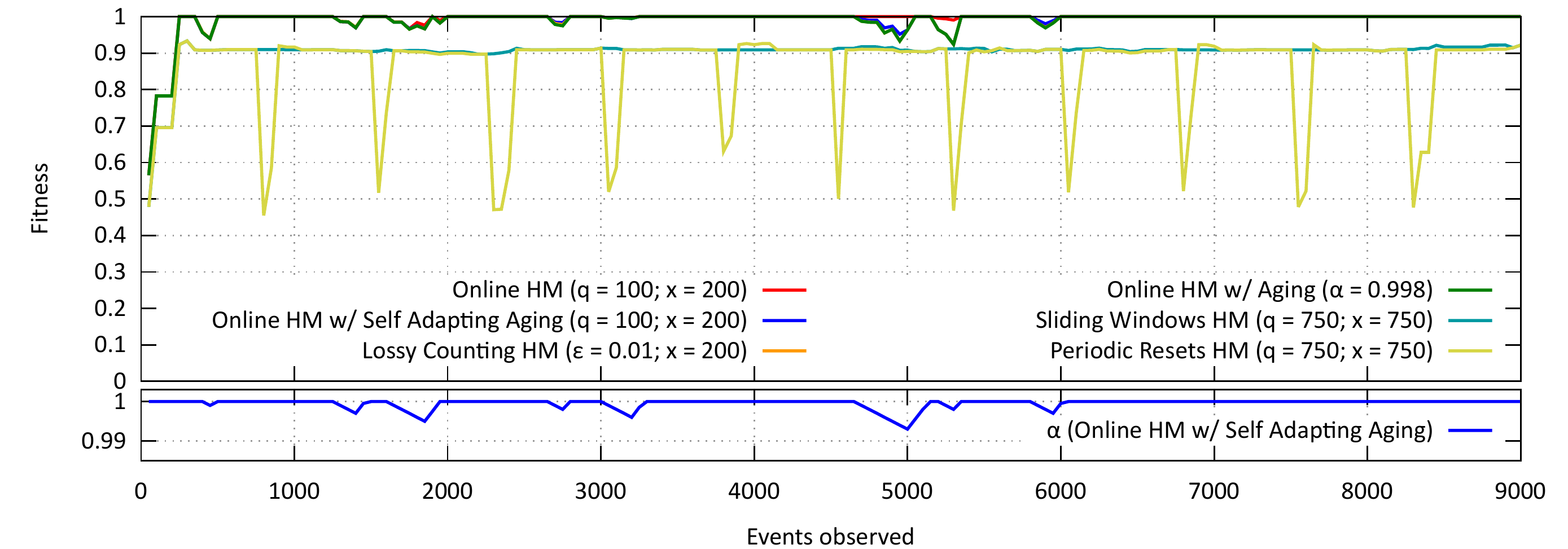}
	\caption{Fitness performance on the real stream dataset by different algorithms.}
	\label{fig:fitness-real}
\end{figure}
The online approaches presented in this work have been tested also against a real dataset and results are presented in \cref{fig:fitness-real}. The reported results refer to 9000 events generated from the document management system, by Siav S.p.A., and run on an Italian bank institute.
The observed process contains 8 activities and is assumed to be stationary. The mining is performed using a queues size of $100$ and, for the fitness computation, the latest $200$ events are considered.
The behavior of the fitness curves seems to indicate that some minor drifts occur.


\begin{figure}[h!]
	\subfloat[Configuration that requires about 100MB. Lossy Counting: $\epsilon: 0.2$, fitness queue size: 200; Online HM: queue size: 500, fitness queue size: 200.]{\includegraphics[width=\textwidth]{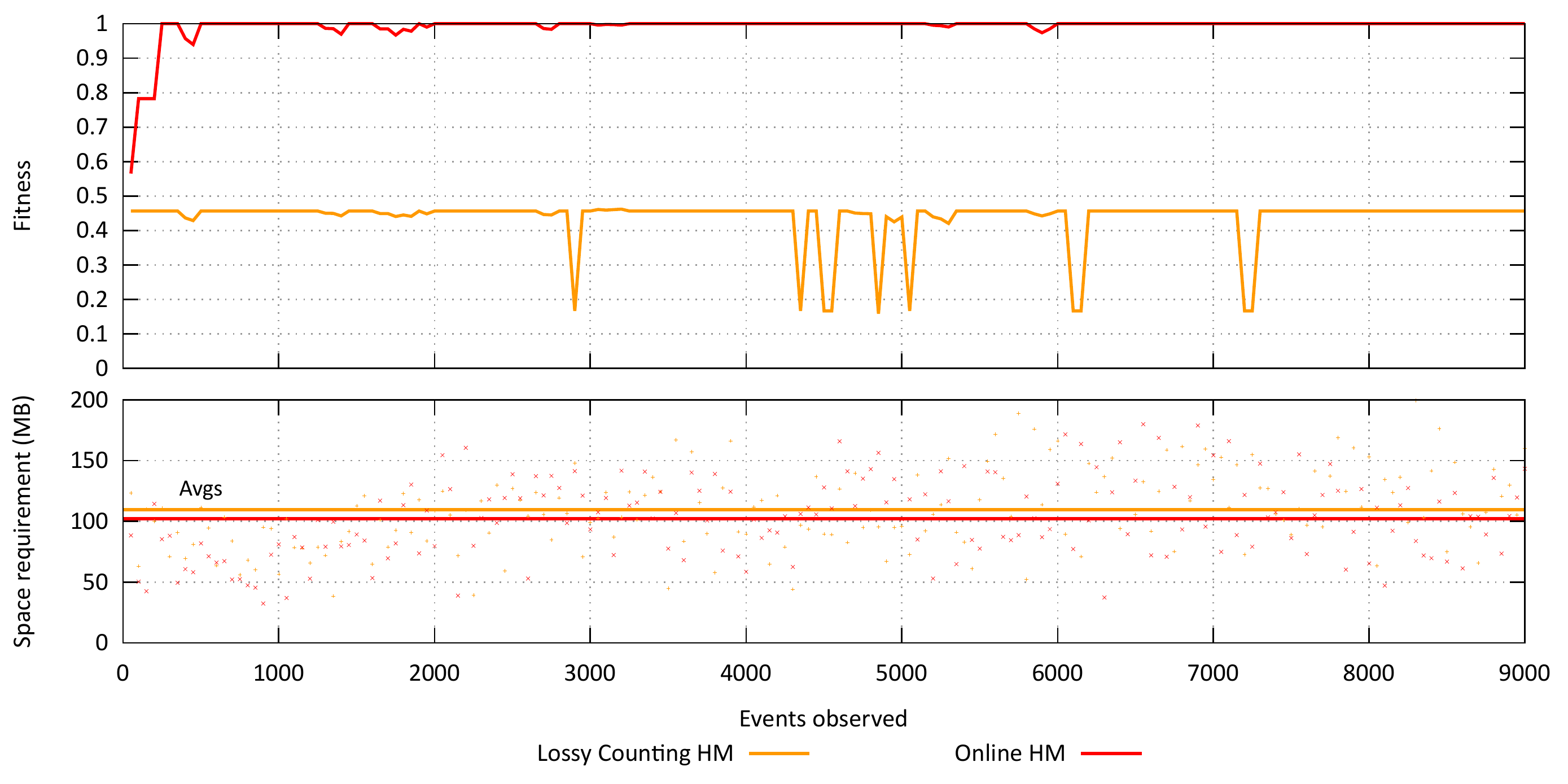}} \\ [1em]
	\subfloat[Configuration that requires about 170MB. Lossy Counting: $\epsilon: 0.01$, fitness queue size: 1000; Online HM: queue size: 1500, fitness queue size: 1000.]{\includegraphics[width=\textwidth]{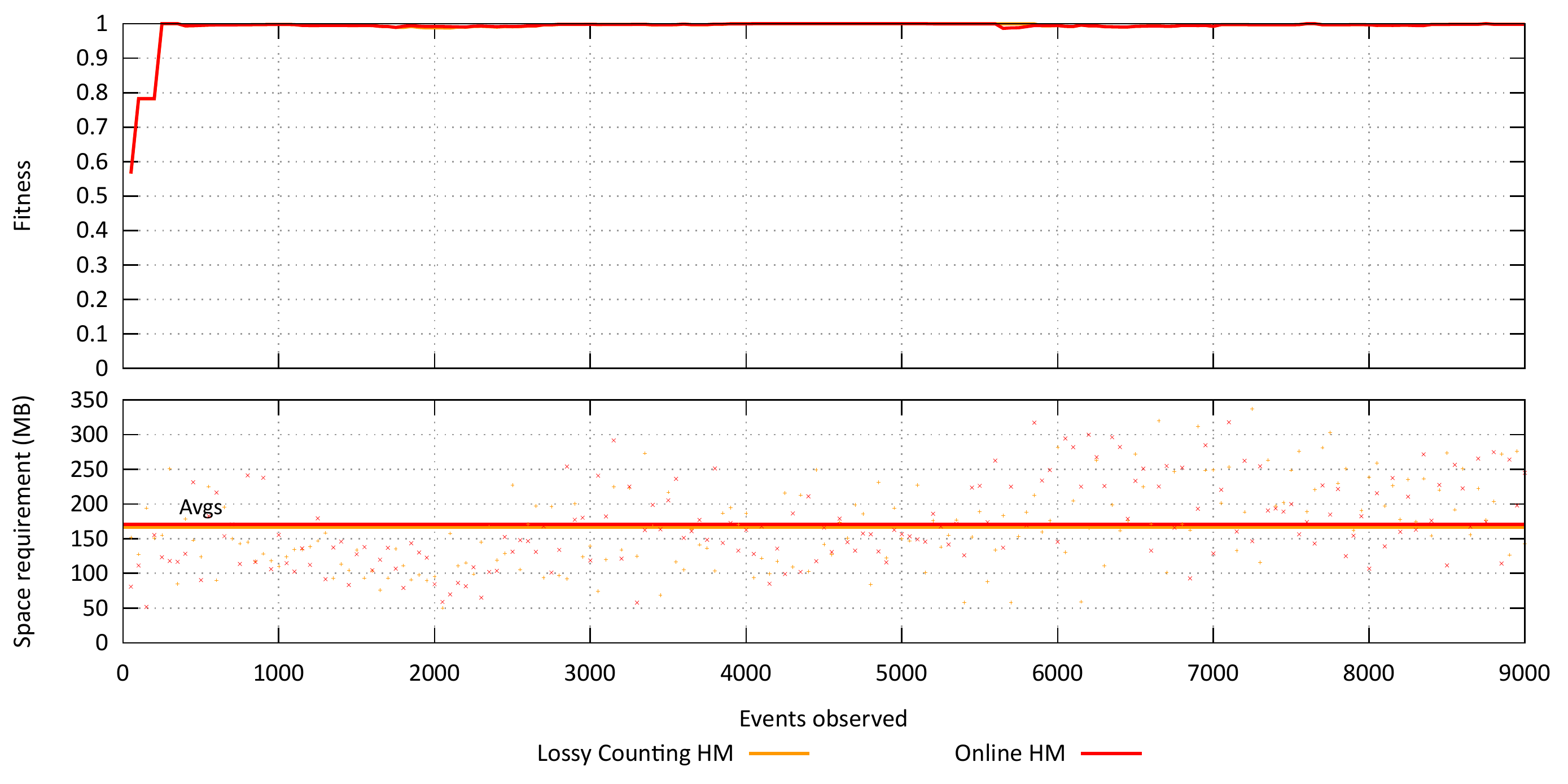}}
	\caption{Performances comparison between Online HM and Lossy Counting, in terms of fitness and memory consumption.}
	\label{fig:memory-comparison-online-lossy}
\end{figure}
As stated before, the main difference between Online HM and Lossy Counting is that, whereas the main parameter of Online HM is the size of the queues (i.e. the maximum space the application is allowed to use), the $\epsilon$ parameter of Lossy Counting cannot control the memory occupancy of the approach. \cref{fig:memory-comparison-online-lossy} proposes two comparisons of the approaches with two different configurations, against the real stream dataset. In particular we defined the two configurations so that the average memory required by Lossy Counting and Online HM are very close. The results presented are actually the average values over four runs of the approaches. Please note that the two configurations validates the fitness against different window sizes (in the first case it contains 200 events, in the second one 1000) and this causes the second configuration to validate results against a larger history.

The top part of the figure presents a configuration that uses, on average, about 100MB. To obtain this performance, several tests have been made and, at the end, for Lossy Counting these parameters have been used: $\epsilon: 0.2$, fitness queue size: 200. For Online HM, the same fitness is used, but the queue size is set to 500. As the plot shows, it is interesting to note that, in terms of fitness, this configuration is absolutely enough for the Online HM approach instead, for Lossy Counting, it is not.
The second plot, at the bottom, presents a different configuration that uses about 170MB. In this case, the error (i.e. $\epsilon$) for Lossy Counting is set to $0.01$, the queue size of Online HM is set to 1500 and, for both, the fitness queue size is set to 1000. In this case the two approaches generate really close results, in terms of fitness.

As final consideration, this empirical evaluation clearly shows that --at least in our real dataset-- both Online HM and Lossy Counting are able to reach very high performances, however the Online is able to better exploit the information available with respect to the Lossy Counting.
In particular, Online HM considers only a finite number of possible observations (depending on the queue size) that, in this particular case, are sufficient to mine the correct model. The Lossy Counting, on the contrary, keeps all the information for a certain time-frame (obtained starting from the error parameter) without considering how many different behaviors are already seen.


\textbf{Note on fitness measure} \hspace{.5em}
The usage of fitness for the evaluation of stream process mining algorithms seems to be an effective choice. However, this might not always be the case: let's consider two very different processes $P'$ and $P''$ and a stream composed of events generated by alternate executions of $P'$ and $P''$.
Under specific conditions, the stream miner will generate a model that contains both $P'$ and $P''$, connected by an initial XOR-split and merged with a XOR-join. This model will have a very high fitness value (it can replay traces from both $P'$ and $P''$), however the mined model is not the one expected, i.e. the alteration in time of $P'$ and $P''$ is not reflected well.

\begin{figure}
	\centering
	\includegraphics[width=\textwidth]{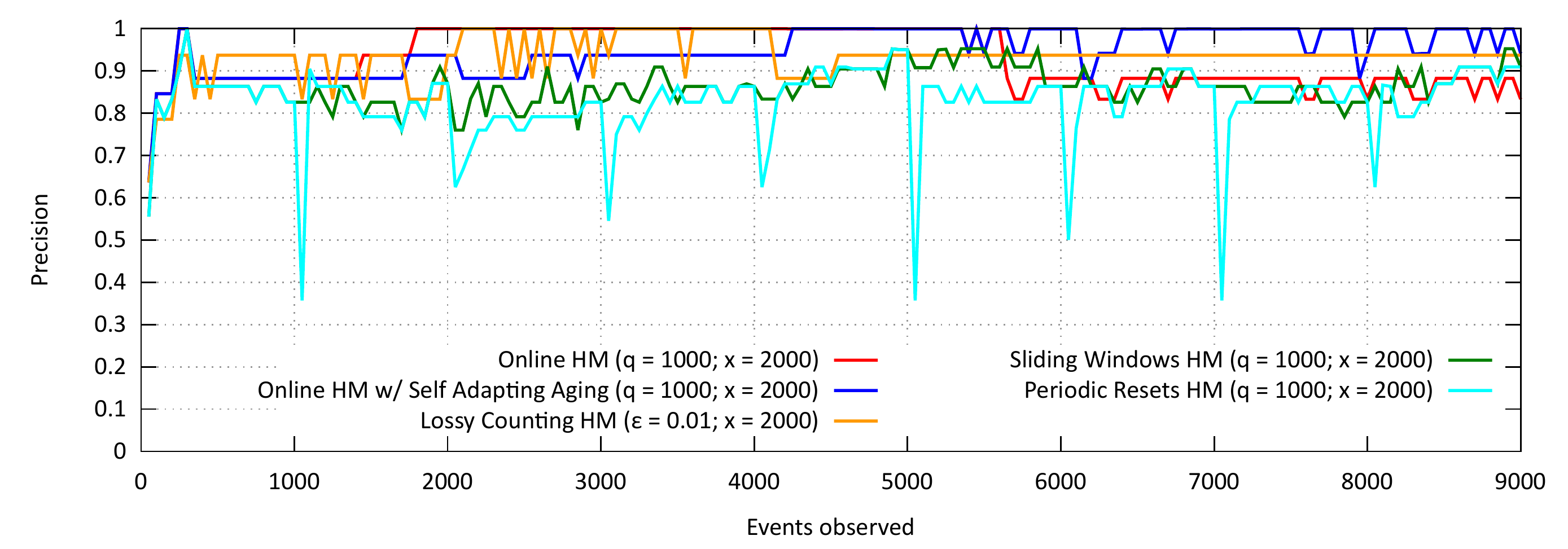}
	\caption{Precision performance on the real stream dataset by different algorithms.}
	\label{fig:precision-real}
\end{figure}

In order to deal with the problem just presented, we propose the performances of some approaches also in terms of ``precision''. This measure is thought to prefer models that describe a ``minimal behavior'' with respect to all the model that can be generated starting from the same log. In particular we used the approach by Mu\~noz-Gama and Carmona described in \cite{Munoz-Gama2010}.
\cref{fig:precision-real} presents the precision calculated for four approaches during the analysis of the dataset of real events. It should not surprise to notice that the stream specific approaches reach very good precision values, whereas the basic approach with periodic reset needs to recompute, every 1000 events, the model from scratch. It is interesting to note that both Online HM and Lossy Counting are not able to reach the top values, whereas the Self adapting one, after some time, reaches the best precision, even if its value fluctuates a bit. The basic approach with sliding window, instead, seems to behave quite nicely, even if the stream specific approaches outperform it.

\section{Conclusions and Future Work} \label{sec:conclusions}

In this paper, we addressed the problem of discovering processes for streaming event data having different characteristics, i.e. stationary streams and streams with drift.

First, we considered basic window-based approaches, where the standard Heuristics Miner algorithm is applied to statics logs obtained by using a \emph{moving window} on the stream (we considered two different policies).
Then we introduced a framework for stream process mining which allows the definition of different approaches, all based on the dependencies between activities. These can be seen as online versions of the Heuristics Miner algorithm and differentiate from each other in the way they assign importance to the observed events. The Online HM, an incremental version of the Heuristics Miner, gives the same importance to all the observed events, and thus it is specifically apt to mine stationary streams. HM with Aging gives less importance to older events. This is obtained by weighting the statistics of an event by a factor, the $\alpha$ value, which exponentially decreases with the age of the event. Because of that, this algorithm is able to cope with streams exhibiting concept drift. The choice of the ``right'' value for $\alpha$, however, is difficult and different values for $\alpha$ could also be needed at different times. To address this issue, we finally introduce Heuristics Miner able to automatically
adapt the aging factor on the basis of the detection of concept drift (HM with Self Adapting).
Finally, we adapted a standard approach (Lossy Counting) to our problem.

Experimental results on artificial, synthetic and real data show the efficacy of the proposed algorithms with respect to the basic approaches. Specifically, the Online HM turns out to be a quite stable and performs well for streams, especially when stationary streams are considered, while HM with Self Adapting aging factor and the Lossy Counting seem to be the right choice in case of concept drift. The largest log has been used also for measuring performance in terms of time and space requirements.

As future work, we plan to conduct a deeper analysis of the influence of the different parameters on the presented approaches. Moreover, we plan to extend the current approach also to mine the organizational perspective of the process. Finally, from a process analyst point of view, it may be interesting to not only show the current updated process model, but also report the ``evolution points'' of the process.

\bibliographystyle{plain}
\bibliography{library}

\appendix
\section{Heuristics Miner} \label{appendix:HM}

\subsection{Heuristics Miner metrics}

Heuristics Miner (HM) \cite{Weijters2003} is a process mining algorithm that counts various types of frequencies to mine dependency relations among activities represented by logs.

The relation $a >_W b$ holds iff there is a trace $\sigma = \langle t_1, t_2, \dots, t_n \rangle$ and $i \in \{ 1, \dots, n-1 \}$ such that $t_i = a$ and $t_{i+1} = b$. The notation $|a >_W b|$ indicates to the number of times that, in $W$, $a >_W b$ holds (no. of times activity $b$ directly follows activity $a$).

The following subsections present a detailed list of all the formulae required by Heuristics Miner to build a process model.

\subsubsection{Dependency Relations ($\Rightarrow$)}

An edge (that usually represents a dependency relation) between two activities is added if its \emph{dependency measure} is above the value of the \emph{dependency threshold}. This relation is calculated, between activities $a$ and $b$, as:
\begin{equation}
	a \Rightarrow_W b = \frac
		{|a >_W b| - |b >_W a|}
		{|a >_W b| + |b >_W a| + 1}
	\label{eq:dep-threshold}
\end{equation}
The rationale of this rule is that two activities are in a dependency relation if most of times they are in the specifically required order.

\subsubsection{AND/XOR Relations ($\wedge$, $\otimes$)}

When an activity has more than one outgoing edge, the algorithm has to decide whether the outgoing edges are in AND or XOR relation (i.e. the ``type of split'').  Specifically, it has to calculate the following quantity:
\begin{equation}
	a \Rightarrow_W (b \wedge c) = \frac
		{|b >_W c| + |c >_W b|}
		{|a >_W b| + |a >_W c| + 1}
	\label{eq:and-threshold}
\end{equation}
If this quantity is above a given \emph{AND threshold}, the split is an AND-split, otherwise it is considered to be in XOR relation. The rationale, in this case, is that two activities are in an AND relation if most of times they are observed in no specific order (so one before the other and vice versa).

\subsubsection{Long Distance Relations ($\Rightarrow^l$)}

Two activities $a$ and $b$ are in a ``long distance relation'' if there is a  dependency between them, but they are not in direct succession. This relation is expressed by the formula:
\begin{equation}
	a \Rightarrow_W^l b = \frac{|a \ggg_W b|}{|b|+1}
	\label{eq:ld-threshold}
\end{equation}
where $|a \ggg_W b|$ indicates the number of times that $a$ is directly or indirectly (i.e. if there are other different activities between $a$ and $b$) followed by $b$ in the log $W$. If this formula's value is above a \emph{long distance threshold}, then a long distance relation is added into the model.

\subsubsection{Loops of Length one and two}

A loop of length one (i.e. a self loop on the same activity) is introduced if the quantity:
\begin{equation}
	a \Rightarrow_W a = \frac{|a >_W a|}{|a >_W a| + 1}
	\label{eq:loop1}
\end{equation}
is above a \emph{length-one loop threshold}. A loop of length two is considered differently: it is introduced if the quantity:
\begin{equation}
	a \Rightarrow_W^2 b = \frac
		{|a >_W^2 b| + |b >_W^2 a|}
		{|a >_W^2 b| + |b >_W^2 a| + 1}
	\label{eq:loop2}
\end{equation}
is above a \emph{length-two loop threshold}. In this case, the $a >_W^2 b$ relation is observed when $a$ is directly followed by $b$ and then there is $a$ again (i.e. for trace $\sigma = \langle t_1, t_2, \dots, t_n \rangle$ there is an $i \in \{ 1, \dots, n-2 \}$ such that $\sigma \in W$ and $t_i = a$ and $t_{i+1} = b$ and $t_{i+2} = a$).

\subsection{Running Example}

Let's consider the process model shown in \cref{fig:hm:sample-log}. Given the set of activities $\{A,B_1,B_2,C,D\}$, a possible log $W$, with 10 process instances, is:
$$
	W = \left\{ \langle A, B_1, B_2, C, D\rangle^5 \ ; \ \langle A, B_2, B_1, C, D \rangle^5 \right\}
$$
Please note that the notation $\langle \cdots \rangle^n$ indicates $n$ case following of the same sequence.
Such log can be generated starting from executions of the process model of \cref{fig:hm:sample-log}.

\begin{figure}
	\centering
	\includegraphics[width=.7\textwidth]{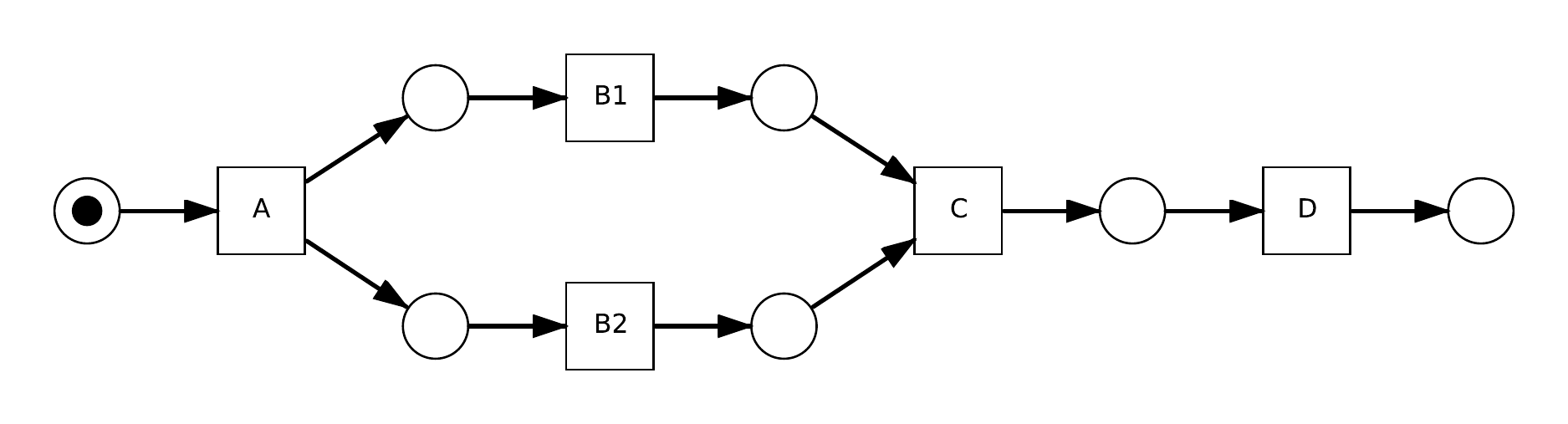}
	\caption{Example of a possible process model that generates the log $W$.}
	\label{fig:hm:sample-log}
\end{figure}

In the case reported in figure, the main measure (dependency relation) builds the following relation:
$$
\bordermatrix{ %
		&	A	& B_1	& B_2	& C		& D     \cr %
	A	& 0	& 0.8\overline{3}	& 0.8\overline{3}	& 0	& 0 \cr %
	B_1	& -0.8\overline{3}	& 0	& 0	& 0.8\overline{3}	& 0 \cr %
	B_2	& -0.8\overline{3}	& 0	& 0	& 0.8\overline{3}	& 0 \cr %
	C	& 0	& -0.8\overline{3}	& -0.8\overline{3}	& 0	& 0.9\overline{09} \cr %
	D	& 0	& 0	& 0	& -0.9\overline{09}	& 0 \cr %
} %
$$
Starting from this relation and considering -- for example -- a value $0.9$ for the \emph{dependency threshold}, it is possible to identify the complete set of dependencies, including the split from activity $A$ to $B_1$ and $B_2$. In order to identify the type of the split it is necessary to use the AND measure (\cref{eq:and-threshold}):
$$
	A \Rightarrow_W (B_1 \wedge B_2) = \frac{5+5}{5+5+1} = 0.9\overline{09}
$$

So, considering --- for example --- an \emph{AND-threshold} of $0.1$, the type of the split is set to AND.
In the ProM implementation, the default value for \emph{dependency threshold} is $0.9$, and for the \emph{AND-threshold} it is $0.1$.

\section{Error Bounds on Online Heuristics Miner} \label{sec:error-bound}

If we assume a stationary stream, i.e. a stream where the distribution of events does not change with time (no concept drift), then it is possible to give error bounds on the measures computed by the online version of Heuristics Miner.

In fact, let consider an execution of the online Heuristics Miner on the stream $S$.
Let $Q_\mathcal{A}(t)$, $Q_\mathcal{C}(t)$, and $Q_\mathcal{R}(t)$ be the content of the queues used by \cref{online_alg} at time $t$.
Let $\mathit{case}_{\mathit{overlap}}(t) = \{c \in \mathcal{C} \mid t_{\mathit{start}}(c) \leq t \wedge t_{\mathit{end}}(c ) \geq t\}$ be the set of cases that are \emph{active} at time $t$; $\Delta_c = \max_t | \mathit{case}_{\mathit{overlap}}(t)|$; $n_c(t)$ be the cumulative number of cases which have been removed from  $Q_\mathcal{C}(t)$ during the time interval $[0,t]$; and $nc(t)=|Q_\mathcal{C}(t)|+n_c(t)$.
Given two activities $a$ and $b$, let $\rho_{ab}\in [0,\xi_{ab}]$ be the random variable reporting the number of successions $(a,b)$ contained in a randomly selected trace in $S$.
With $\mathcal{A}_S$ and $\mathcal{R}_S$ we denote the set of activities and successions, respectively, observed for the entire stream S. Then it is possible to state the following theorem:

\begin{theorem}[Error bounds]
Let $(a\Rightarrow_S b)$, $a\Rightarrow_S (b\wedge c)$, 
be the measures computed by the Heuristics Miner algorithm on a time-stationary stream $S$, and
$(a\Rightarrow_{S_0^t} b)$, $a\Rightarrow_{S_0^t} (b\wedge c)$, 
be the measures computed at time $t$ by the online version of the Heuristics Miner algorithm on the stream $S$.
If $\max_A \geq |A_S|$, $\max_R \geq |R_S|$, $\max_C \geq \Delta_c$, then with probability $1-\delta$ the following bounds hold:
\begin{multline*}
	(a\Rightarrow_S b)\left(\frac{E[ \rho_{ab}+\rho_{ba} ]}{E[\rho_{ab}+\rho_{ba}]+\epsilon_{ab}(t)+\frac{1}{nc(t)}} \right) - \\
	\frac{\epsilon_{ab}(t)}{E[\rho_{ab}+\rho_{ba}]+\epsilon_{ab}(t)+\frac{1}{nc(t)}}
	\leq (a\Rightarrow_{S_0^t} b)
\end{multline*}
\begin{multline*}
	(a\Rightarrow_{S_0^t} b) \leq
	(a\Rightarrow_S b)\left(\frac{E[ \rho_{ab}+\rho_{ba} ]}{E[\rho_{ab}+\rho_{ba}]-\epsilon_{ab}(t)+\frac{1}{nc(t)}} \right) + \\
	\frac{\epsilon_{ab}(t)}{E[\rho_{ab}+\rho_{ba}]-\epsilon_{ab}(t)+\frac{1}{nc(t)}}
\end{multline*}

\noindent
And, similarly, for $a\Rightarrow (b\wedge c)$:
\begin{multline*}
	(a\Rightarrow_S (b\wedge c))\left(\frac{E[ \rho_{bc}+\rho_{cb} ]}{E[\rho_{ab}+\rho_{ac}]+\epsilon_{abc}(t)+\frac{1}{nc(t)}} \right) - \\
	\frac{\epsilon_{bc}(t)}{E[\rho_{ab}+\rho_{ac}]+\epsilon_{abc}(t)+\frac{1}{nc(t)}}
	\leq (a\Rightarrow_{S_0^t} (b\wedge c))
\end{multline*}
\begin{multline*}
	(a\Rightarrow_{S_0^t} (b\wedge c)) \leq
	(a\Rightarrow_S (b\wedge c))\left(\frac{E[ \rho_{bc}+\rho_{cb} ]}{E[\rho_{bc}+\rho_{cb}]-\epsilon_{abc}(t)+\frac{1}{nc(t)}} \right) + \\
	\frac{\epsilon_{bc}(t)}{E[\rho_{ab}+\rho_{ac}]-\epsilon_{abc}(t)+\frac{1}{nc(t)}}
\end{multline*}

\noindent
where $\forall d,e,f \in A_S,\ \epsilon_{de}(t) = \sqrt{\frac{(\xi_{de}+\xi_{ed})^2 \ln(2/\delta)}{2nc(t)}},\ \epsilon_{def}(t) = \sqrt{\frac{(\xi_{de}+\xi_{df})^2 \ln(2/\delta)}{2nc(t)}}$, and $E[x]$ is the expected value of $x$.
\end{theorem}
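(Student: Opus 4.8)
The plan is to read the two Heuristics Miner measures as ratios of per-case succession counts, to estimate each count by its empirical average over the cases seen up to time $t$, and then to control the estimation error of numerator and denominator with Hoeffding's inequality before recombining them through elementary monotonicity of the ratio. First I would cash in the queue-size hypotheses: since $\max_A \geq |A_S|$, $\max_R \geq |R_S|$ and $\max_C \geq \Delta_c$, no activity, succession, or currently active case is ever evicted prematurely by $\mathit{removeLast}$, so at time $t$ the counters held in $Q_\mathcal{A}$ and $Q_\mathcal{R}$ coincide with the exact counts $|a >_{S_0^t} b|$ accumulated over the $nc(t)$ cases observed so far. Using stationarity I would then model each complete case as an independent draw from the process, so that $\rho_{ab}$, the number of $(a,b)$ successions in a random trace, is i.i.d.\ across cases and bounded in $[0,\xi_{ab}]$.

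Dividing numerator and denominator of $a \Rightarrow_{S_0^t} b$ by $nc(t)$ rewrites the online measure as $\frac{\hat\mu_{ab}-\hat\mu_{ba}}{\hat\mu_{ab}+\hat\mu_{ba}+\frac{1}{nc(t)}}$, where $\hat\mu$ denotes an empirical mean; the extra $\frac{1}{nc(t)}$ is exactly the $+1$ of the HM denominator rescaled, and the target $a\Rightarrow_S b$ is the same ratio with expectations in place of empirical means and with the $\frac{1}{nc(t)}$ term driven to zero. The core probabilistic step then applies Hoeffding's inequality twice: to $\rho_{ab}-\rho_{ba}$ (the numerator, whose range has width $\xi_{ab}+\xi_{ba}$) and to $\rho_{ab}+\rho_{ba}$ (the denominator, same width). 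Inverting the two-sided tail $2\exp(-2\,nc(t)\,\epsilon^2/(\xi_{ab}+\xi_{ba})^2) = \delta$ gives precisely $\epsilon_{ab}(t)=\sqrt{(\xi_{ab}+\xi_{ba})^2\ln(2/\delta)/(2\,nc(t))}$, so that $|\hat\mu_{ab}-\hat\mu_{ba}-E[\rho_{ab}-\rho_{ba}]|\leq\epsilon_{ab}(t)$ and $|\hat\mu_{ab}+\hat\mu_{ba}-E[\rho_{ab}+\rho_{ba}]|\leq\epsilon_{ab}(t)$ each hold with probability $1-\delta$.

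Substituting the one-sided versions of these bounds into the rewritten ratio, and using that the ratio is increasing in its numerator and decreasing in its positive denominator, yields the stated upper and lower bounds after clearing the common $E[\rho_{ab}+\rho_{ba}]$ factor (the sign cases for the numerator need a brief separate check but do not change the conclusion). The AND measure is handled identically, with $\rho_{bc}+\rho_{cb}$ in the numerator (range width $\xi_{bc}+\xi_{cb}$, producing $\epsilon_{bc}(t)$) and $\rho_{ab}+\rho_{ac}$ in the denominator (producing $\epsilon_{abc}(t)$), the same cancellation being what removes the common denominator expectation.

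I expect the main obstacle to lie not in the Hoeffding step but in the bookkeeping around incomplete cases and the probability budget. At time $t$ as many as $\Delta_c$ cases in $\mathit{case}_{\mathit{overlap}}(t)$ are still active and have contributed only truncated succession counts, so the empirical means are not exactly unbiased estimators of $E[\rho_{ab}]$; I would argue that, since the number of such cases is bounded by $\Delta_c$ while the denominator grows like $nc(t)$, their effect is absorbed into the $\frac{1}{nc(t)}$ correction and the bound survives, with the finite-$t$ regime being where genuine care is required. The second delicate point is that the numerator and denominator guarantees, and ultimately the guarantees across all relevant activity pairs, must hold simultaneously; a naive union bound would inflate the failure probability, so the clean ``$1-\delta$'' statement needs either a single shared Hoeffding event per direction or an explicit folding of the constants into $\delta$, which I would spell out rather than leave implicit.
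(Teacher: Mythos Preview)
Your approach is essentially the same as the paper's: rewrite the measure as a ratio of empirical per-case means by dividing through by the number of cases, identify the limiting ratio $E[\rho_{ab}-\rho_{ba}]/E[\rho_{ab}+\rho_{ba}]$ with $(a\Rightarrow_S b)$, apply Hoeffding separately to the numerator variable $X=\rho_{ab}-\rho_{ba}$ and the denominator variable $Y=\rho_{ab}+\rho_{ba}$ (both with range $\xi_{ab}+\xi_{ba}$), and then combine via monotonicity of the ratio; the AND case is handled identically with the obvious substitutions. The paper also uses the queue-size hypotheses exactly as you do, to ensure nothing is evicted and hence $N_c=nc(t)$.

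Where you differ is in being more scrupulous than the paper. The two issues you flag---truncated contributions from the at most $\Delta_c$ active cases, and the need for a union bound so that the numerator and denominator Hoeffding events hold simultaneously---are not addressed in the paper's proof at all: it simply applies Hoeffding once for $X$ and once for $Y$, each ``with probability $1-\delta$'', and then asserts the combined inequality with the same $1-\delta$. So your plan matches the paper's argument step for step, and the obstacles you anticipate are real but are precisely the points the paper leaves implicit; you would be justified in presenting the same derivation and, if you wish, noting that a clean statement strictly requires replacing $\delta$ by $\delta/2$ (or similar) in the definition of $\epsilon_{ab}(t)$.
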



\begin{proof}
Let consider the Heuristics Miner definition $(a\Rightarrow_S b)=\frac{|a>_S b|-|b>_S a|}{|a>_S b|+|b>_S a|+1}$ (as presented in \cref{eq:dep-threshold}). Let $N_c$ be the number of cases contained in $S_0^t$, then
$$
	(a\Rightarrow_{S_0^t} b) =
		\frac{| a >_{S_0^t} b | - | b >_{S_0^t} a |}{|a>_{S_0^t} b|+|b>_{S_0^t} a|+1} =
		\frac{\frac{| a >_{S_0^t} b | - | b >_{S_0^t} a |}{N_c}}{\frac{|a>_{S_0^t} b|+|b>_{S_0^t} a|}{N_c}+\frac{1}{N_c}}
$$
and
$$
	(a\Rightarrow_S b)=\lim_{N_c\rightarrow +\infty} \frac{\frac{|a>_{S_0^t} b|-|b>_{S_0^t} a|}{N_c}}{\frac{|a>_{S_0^t} b|+|b>_{S_0^t} a|}{N_c}+\frac{1}{N_c}}= \frac{E[ \rho_{ab}-\rho_{ba} ]}{E[\rho_{ab}+\rho_{ba}]}.
$$
We recall that $\overline{X}=\frac{|a>_{S_0^t} b|-|b>_{S_0^t} a|}{N_c}$ is the mean of the random variable $X = (\rho_{ab}-\rho_{ba})$ computed over $N_c$ independent observations, i.e. traces, and that  $X\in [-\xi_{ba},\xi_{ab}]$.
We can then use the \emph{Hoeffding} bound \cite{Hoeffding1963} that states that, with probability $1-\delta$
$$
	\left| \overline{X}-E[X] \right| < \epsilon_X = \sqrt{\frac{r_X^2 \ln\left(\frac{2}{\delta}\right)}{2N_c}},
$$
\noindent where $r_X$ is the range of $X$, which in our case is $r_X=(\xi_{ab}+\xi_{ba})$.

By using the \emph{Hoeffding} bound also for the variable $Y=(\rho_{ab}+\rho_{ba})$, we can state that with probability $1-\delta$
$$
	\frac{E[X] - \epsilon_X}{E[Y]+\epsilon_Y+\frac{1}{N_c}} \leq \frac{\overline{X}}{\overline{Y}+\frac{1}{N_c}} = (a\Rightarrow_{S_0^t} b),
$$
\noindent which after some algebra can be rewritten as
\[\frac{E[X]}{E[Y]}\left(\frac{E[Y]}{E[Y]+\epsilon_{Y}+\frac{1}{N_c}} \right) - \frac{\epsilon_{X}}{E[Y]+\epsilon_{Y}+\frac{1}{N_c}}\leq  (a\Rightarrow_{S_0^t} b)\]
By observing that $(a\Rightarrow_{S} b) = \frac{E[X]}{E[Y]}$, $r_X=r_Y=(\xi_{ab}+\xi_{ba})$, and that at time $t$, under the theorem hypotheses, no information is removed from the queues and $N_c = nc(t)$, the
first bound is proved.
The second bound can be proved starting from
\[(a\Rightarrow_{S_0^t} b) \leq \frac{E[X] + \epsilon_X}{E[Y]-\epsilon_Y+\frac{1}{N_c}}.
\]

The last two bounds can be proved in a similar way by considering $X=(\rho_{bc}+\rho_{cb})\in [0, \xi_{bc} + \xi_{cb}]$ and $Y=(\rho_{ab}+\rho_{ac})\in [0, \xi_{ab} + \xi_{ac}]$, which leads to $ \epsilon_X = \sqrt{\frac{(\xi_{bc} + \xi_{cb})^2ln(2/\delta)}{2N_c}}$ and $ \epsilon_Y = \sqrt{\frac{(\xi_{ab} + \xi_{ac})^2ln(2/\delta)}{2N_c}}$.
\end{proof}

Similar bounds can be obtained also for the other measures computed by Heuristics Miner. From the bounds it is possible to see that, with the increase of the number of observed cases $nc(t)$, both $\frac{1}{nc(t)}$ and the errors $\epsilon_{ab}(t)$ and $\epsilon_{abc}(t)$ go to $0$ and the measures computed by the online version of Heuristics Miner consistently converge to the  ``right'' values.

\end{document}